\RequirePackage{fix-cm}
\documentclass[smallextended]{svjour3}       
\smartqed  
\usepackage{amssymb}
\usepackage{amsmath}
\makeatletter  
\spn@wtheorem{assumption}{Assumption}{\bfseries}{\rmfamily}
\makeatother   
\usepackage{algorithmic}
\usepackage{algorithm}
\usepackage{booktabs}
\usepackage{siunitx}
\usepackage{multirow}
\usepackage{caption,subcaption}
\usepackage{graphicx}
\graphicspath{{fig/}}
\def\R{\mbox{R}}
\def\F{\mbox{F}}
\def\RE{\mbox{RE}}
\def\MSE{\mbox{MSE}}
\def\PSNR{\mbox{PSNR}}

\def\SSIM{\mbox{SSIM}}

\def\sgn{\mbox{sgn}} 
\begin{document}
	\begin{sloppypar}

\title{A hybrid statistical sampling and iterative regularization method in sparse-view computed tomography
\thanks{This research was funded by National Natural Science Foundation of China (NSFC) grant number 12271312 and Shandong Provincial Outstanding Youth Fund (Grand No. ZR2018JL002)}}



\author{Huiying Li         \and
        Yizhuang Song 
}


\institute{Huiying Li, \at
              Mathematics and Statistics, Shandong Normal University, Jinan, Shandong, P. R. China \\
           \and
           Yizhuang Song (Corresponding author) \at
           Mathematics and Statistics, Shandong Normal University, Jinan, Shandong, P. R. China
              \email{ysong@sdnu.edu.cn} 
}

\date{Received: date / Accepted: date}

\maketitle

\begin{abstract}
Sparse-view computed tomography (CT) is an effective method to reduce the radiation exposure in medical imaging. To reduce the severe streaking artifacts that occur in reconstructed images due to violation of the Nyquist/Shannon sampling criterion, regularization is widely used to minimize the cost function. However, the iterative methods may lead to the accumulation and propagation of errors, which adversely affects the restoration of image details and textures. In this paper, we propose a hybrid model that integrates statistical sampling with iterative regularization to simultaneously shorten the sampling time and enhance the reconstruction quality. The proposed method is validated using three datasets: the Shepp-Logan phantom, the actual walnut X-ray projections provided by the Finnish Inverse Problems Society, and the clinical lung CT images.
\keywords{statistical sampling \and regularization \and hybrid model\and sparse-view CT}
\subclass{65R32 \and 65K10 \and 68U10}
\end{abstract}

\section{Introduction}\label{sec:intro}
Computed Tomography (CT)\cite{CT} is a non-destructive imaging technique that exploits the differential absorption of X-rays by various tissues and organs to visualize the internal structures of the human body.  Furthermore, the X-ray absorption capacities of diseased tissues and normal tissues are different. The physical quantity that characterizes a tissue's ability to absorb X-rays is the attenuation coefficient $\mu$. The CT mathematical principle is to detect the residual intensity of X-rays after attenuation (projection data) by the imaging object and then use the inverse Radon  transform\cite{Radon} to reconstruct the attenuation coefficient $\mu$. Visualizing the attenuation coefficient $\mu$ can assist doctors in diagnostic decision-making.

CT provides high-resolution images of tissues and organs. However, due to the high-energy attribute of X-rays, traditional CT has the risk of cancer\cite{Hall2008}. Therefore, low-dose CT has remained a prominent research focus in this field. Our work focuses on sparse-view CT, a main method of low-dose CT. Sparse sampling violates the Nyquist/Shannon sampling criterion, resulting in severe streaking artifacts in the reconstructed attenuation coefficient images\cite{Hu2014}. Such artifacts can confuse the artifacts with tissue lesions, potentially causing misdiagnosis by doctors. Therefore, Reducing streaking artifacts is a key scientific issue in sparse-view CT reconstruction.

Based on the Algebraic Reconstruction Technique\cite{CT} (ART), we add the appropriate regularization terms into the cost function to reduce streaking artifacts. The commonly used functional minimization methods containing regularization terms can be classified into direct matrix inversion method\cite{CT}, iteration-based minimization method\cite{CT}, statistical calculation method\cite{SomersaloBook}, and neural network method\cite{AI}, etc. Among these, the direct matrix inversion method is often used in the minimization of the energy functional containing a Tikhonov regularization term\cite{CT}. 
This method has high computational efficiency. However, the Tikhonov regularization term tends to blur boundaries of internal tissues, which affects doctors' judgment of the properties of lesion tissues. The iteration-based minimization methods are often used to minimize the energy functional, which contains regularization terms specifically designed to preserve boundary features.
Regularization terms of this type mainly include Total Variation\cite{CT} (TV), high-order TV\cite{Niu2014}, L1/L2\cite{L1/L2}, and Nonlinear Weighted Anisotropic TV\cite{song,self} (NWATV) and so on. However, the iterative method often leads to the accumulation of calculation errors or noise, which may result in the non-convergence of the iterative process and the blurring of small abnormal tissues. The statistical calculation method uses statistical sampling techniques to minimize the energy functional. This method can effectively prevent small anomalies from being blurred.
However, in actual medical imaging, this method is limited used due to its low calculation efficiency. The neural network method learns a deep neural network by using labeled data to describe the mapping relationship between the input (the initial energy $I_0$ of X-rays, the projection data $I(\theta,s)$) and the output $\mu$. While this method can establish a relatively accurate mapping relationship based on the sufficient high-quality data, the neural network is a black box, the performance depends on the quality of the selected training set data, and its stability is not high\cite{AI2}.

To reduce the streaking artifacts, we construct an energy functional including a fidelity term and boundary-preserving regularization terms. For the optimization problem of this functional, we propose a novel hybrid optimization method that integrates the fast convergence of the iterative minimization algorithm with the edge-preserving capability of the statistical calculation method. 
The specific process includes: Firstly,  locate the suspected lesion area (i.e., the region of interest, abbreviated as ROI) based on experience and extract the corresponding projection data; Subsequently, the attenuation coefficients of ROI are statistically sampled using the Markov Chain Monte Carlo (MCMC) method. 
Then, the conditional mean estimation (CM) of the sampling results is imported as prior information into the energy functional to be minimized. The advantages of this method are as follows:
Firstly, by focusing on a limited ROI, the computational load of MCMC sampling is significantly reduced, and the computational efficiency is improved  by approximately 97\%. 
Secondly, incorporating statistical priors into the iterative process effectively avoids attenuating the features of micro-lesions smoothly during the iterative process in the traditional methods,
and experimentally, the relative error of the ROI is reduced by approximately 17.79\% (improving the sharpness of the corresponding boundaries).

In this paper, we propose the NWATV-Gaussian (NWATG) prior to guide the MCMC sampling within the ROI. The effectiveness of this prior is validated using three datasets: the Shepp-Logan phantom, a walnut projection provided by Finnish Inverse Problems
Society (http://fips.fi/dataset.php), and  clinical lung CT image  provided by The Cancer
Imaging Archive (TCIA: https://www.cancerimagingarchive.net/collection/lungct-diagnosis/). 
Furthermore, the same datasets are employed to test the performance of our hybrid model for global image reconstruction. Compared with the traditional iterative method (based on the box-constrained NWATV regularization), our method improves the reconstruction quality of the images.

The rest of the paper is organized as follows. In Section 2, we briefly introduce the Bayesian-Variational hybrid model, including the MCMC sampling procedure for ROI and the global image reconstruction framework. In Section 3, we validate the performance of the proposed hybrid method using the Shepp-Logan phantom, the actual walnut CT projection, and the clinical lung CT image. In Section 4, we discuss parameter selection within the statistical framework via a hierarchical modeling strategy.  In Section 5, we present conclusions and suggest directions for future research.

\section{The Bayesian-Variational hybrid model}\label{sec:intro_HM}
Traditional iterative minimization method suffers from slow convergence and  blurring of small abnormal tissues during the energy functional minimization, while statistical calculation method suffers from the low computational efficiency. To address these limitations, we propose a Bayesian-Variational hybrid model. This model introduces regional MCMC statistical sampling into the iterative optimization framework to form a joint inversion method with prior knowledge. The steps of the proposed model are as follows:
\begin{itemize}
	\item[S1.] Lesion segmentation: Segment the region of interest (ROI) of the suspected lesion;
	\item[S2.] Projection data extraction: Extract the sparse projection subset corresponding to the ROI;
	\item[S3.] Bayesian sampling of ROI: Construct a Bayesian posterior probability framework and perform MCMC sampling within the ROI;
	\item[S4.] Global image reconstruction: Incorporate the conditional mean of the MCMC samples as a prior into the iterative optimization  framework and minimize the energy functional.
\end{itemize}

In S1, the ROI segmentation is performed based on the suspected lesion areas selected by doctors through experiential assessment. The segmented binary image is $\alpha=\chi_{\mathrm{ROI}}$, where $\chi_{\mathrm{ROI}}$ is the characteristic function of ROI. The segmented ROI is shown in Fig. \ref{figure:ground}.

In S2, we apply the CT forward projection operator $A$ to the reference image $\tilde{\mathbf{u}}$ and the mask $\alpha$, and we obtain the approximate projection data $\mathbf{b}=A\mathrm{vect}\left(\tilde{\mathbf{u}}\odot(1-\alpha)\right)$ of $\Omega\backslash\mathrm{ROI}$, where $\tilde{\mathbf{u}}\odot(1-\alpha)$ represents the multiplication of the corresponding elements of two matrices, $\mathrm{vect}(\cdot)$ represents the column vectorization operator of matrix '$\cdot$', and $\Omega$ represents the imaging object. Then, $\mathbf{y}_{\mathrm{ROI}}:=\left\{\mathbf{y}-\mathbf{b}:|\mathbf{y}-\mathbf{b}|>thresholder\right\}$ is the projection generated for ROI, where $\mathbf{y}$ is the measured projection data, $thresholder$ is a given threshold.

In S3, we propose a Bayesian posterior probability framework with a nonlinear weighted anisotropic TV-Gaussian (NWATG) prior, which can effectively perform MCMC sampling. The relevant details of this Bayesian model are presented in Subsection \ref{S3}.
\subsection{The nonlinear weighted anisotropic TV-Gaussian Bayesian model}\label{S3}
In this section, we construct the statistical sampling model of ROI,\begin{equation}
	\frac{d\nu^y_n}{d\nu_0}\varpropto\exp\left\{-\frac{1}{2}\|\mathbf{y}_{\mathrm{ROI}}-A_{\mathrm{ROI}}\mathbf{u}_{\mathrm{ROI}}\|_{\Sigma_0}^2-\lambda_{\mathrm{ROI}}\|\mathbf{p}_{\mathrm{ROI}}\cdot\mathbf{\mathcal{D}}\mathbf{u}_{\mathrm{ROI}}\|_{\ell_{1}}\right\},\label{J}
\end{equation}where $\mathbf{p}_{\mathrm{ROI}}=\left(\frac{1}{|\mathbf{\mathcal{D}}_x\mathbf{u}_{\mathrm{ROI}}|^2+\beta}\frac{1}{|\mathbf{\mathcal{D}}_y\mathbf{u}_{\mathrm{ROI}}|^2+\beta}\right)$, $\Sigma_0$ is the covariance matrix of noise, $\|f\|_{\Sigma_0}^2=<\Sigma_0^{-\frac{1}{2}}f,\Sigma_0^{-\frac{1}{2}}f>$. In addition, $A_{\mathrm{ROI}}$ is the submatrix of $A$ that contains the rays passing through the ROI, $\mathbf{u}_{\mathrm{ROI}}$ is the reconstructed ROI image in column-vector form. 
$\nu_0=\mathcal{N}(0,\mathbf{C})$, and $\mathbf{C}$ represents the covariance matrix. The matrix $\mathbf{C}=(C_{i,j})$ is defined as\cite{Lv2020}
\begin{equation}
	C_{i,j}=\exp\left\{-\frac{(\mathbf{u}_{ref}[i]-\mathbf{u}_{ref}[j])^{2}}{h^{2}}\right\}.\label{C0}
\end{equation}
The term $\mathbf{u}_{ref}$ represents the reference image of the ROI, defined as the non-zero elements of $\tilde{\mathbf{u}}\odot\alpha$. Here $\mathbf{u}_{ref}[\cdot]$ represents the element of $\mathbf{u}_{ref}$. Moreover, the parameter $h$, which controls the bandwidth of the Gaussian prior, is usually set to twice the standard deviation of the noise in the reference image.

We use the preconditioned Crank-Nicolson (pCN)\cite{pCN} algorithm to sample the attenuation coefficients within the ROI, with the conditional mean of the resulting samples is denoted as $\mathbf{X}$.  Donate $J(\mathbf{u}_{\mathrm{ROI}})=\frac{1}{2}\|\mathbf{y}_{\mathrm{ROI}}-A_{\mathrm{ROI}}\mathbf{u}_{\mathrm{ROI}}\|_{\Sigma_0}^2+\lambda_{\mathrm{ROI}}\|\mathbf{p}_{\mathrm{ROI}}\cdot\mathbf{\mathcal{D}}\mathbf{u}_{\mathrm{ROI}}\|_{\ell_{1}}$, and the acceptance rate $a(\mathbf{u}_{\mathrm{ROI}},\mathbf{v})$ is defined as$$a(\mathbf{u}_{\mathrm{ROI}},\mathbf{v})=\min\{1,\exp\{J(\mathbf{u}_{\mathrm{ROI}})-J(\mathbf{v})\}\}.$$ The specific process is summarized in Algorithm \ref{algorithm:CM}.
\begin{algorithm}[h]
	\caption{Statistical sampling of ROI attenuation coefficients via pCN}
	\label{algorithm:CM}
	\begin{algorithmic}[1]
		\REQUIRE {Projection matrix $A_{\mathrm{ROI}}$,  observed data $\mathbf{y}_{\mathrm{ROI}}$, covariance matrix $\mathbf{C}$.\\ \textbf{Parameters:} $\lambda_{\mathrm{ROI}},\beta,\gamma\in\mathbb{R}^{+}$, the maximum iteration number $k_{sta}\in\mathbb{Z}^{+}$.}
		
		\ENSURE {The conditional mean $\mathbf{X}$}
		\STATE {Initialize: $\mathbf{u}_{\mathrm{ROI}}^{(0)}$}
		\FOR{k=1:$k_{sta}$}
		\STATE {Update $\mathbf{v}=\sqrt{1-\gamma^2}\mathbf{u}_{\mathrm{ROI}}^{(k-1)}+\alpha\mathbf{w},$ where $\mathbf{w}\sim\mathcal{N}(0,\mathbf{C})$.}
		\STATE {Draw $t\sim\mathcal{U}[0,1]$.}
		\IF{$t\le a(\mathbf{u}_{\mathrm{ROI}}^{(k-1)},\mathbf{v})$} \STATE{$\mathbf{u}_{\mathrm{ROI}}^{(k)}=\mathbf{v}$;}
		\ELSE 
		\STATE{$\mathbf{u}_{\mathrm{ROI}}^{(k)}=\mathbf{u}_{\mathrm{ROI}}^{(k-1)}$;}\ENDIF
		\ENDFOR
		\STATE{Calculate the conditional mean $\mathbf{X}$.}
	\end{algorithmic}
\end{algorithm}

\subsubsection{The structure of NWATG priors}\label{subsubsec:NWATG}
We consider a bounded domain $\Omega\subset\mathbb{R}^2$, with area denoted by $S_{\Omega}$. Let $\mu:\Omega\to\mathcal{X}$ denote the unknown attenuation coefficient, belonging to the function space $$\mathcal{X}=\mathcal{W}_{\eta}^{1,\infty}=\{\mu\in\mathcal{W}^{1,\infty}\vert\|\nabla\mu\|_{L^{\infty}}<\eta\}.$$
Moreover, the measured data $y$ belongs to a Hilbert space $\mathcal{Y}$ and satisfies the forward model$$y=\F(\mu)+e,$$where $\F:\mathcal{X}\to\mathcal{Y}$ represents the forward operator, and $e$ is additive Gaussian noise with mean 0 and covariance matrix $\Sigma_0$. In the Bayesian framework, the prior and posterior distributions of $\mu$ are denoted by $\nu_{pr}$, $\nu^y$, respectively. Due to the Radon-Nikodym(R-N) derivative (\cite{Stuart2010} Theorem 6.2), we can obtain:
\begin{equation}
	\frac{d\nu^y}{d\nu_{pr}}\propto\exp\{-\Phi(\mu;y)\}. \label{likehood}
\end{equation}Where $\Phi(\mu;y)=\frac{1}{2}\|\F(\mu)-y\|_{\mathcal{Y}}^2$ represents the data fidelity item. In the Bayesian framework, the posterior distribution depends on the properties of the image. Therefore, the selection of the prior distribution directly influences the posterior distribution. The hybrid prior is \begin{equation}
	\frac{d\nu_{pr}}{d\nu_0}\propto\exp\{-\R(\mu)\}.\label{pr}
\end{equation}Where $\R(\mu)=\lambda\|p\nabla\mu\|_{L^1(\Omega)}$,  which is proposed in \cite{self}. Specifically, $p=(p_1,p_2)=\left(\frac{1}{|\partial_1\mu|^2+\beta},\frac{1}{|\partial_2\mu|^2+\beta}\right)$, $\nabla=(\partial_1,\partial_2)$, and $\partial_1$, $\partial_2$ represent the first-order differential operators in the $x_1$ and $x_2$ directions, respectively. Let $\nu_0$ denote the Gaussian measure with  mean 0 and covariance matrix $\mathcal{C}$ for the reference image. Then, since \eqref{likehood} and \eqref{pr}, the posterior measure can be written as \begin{equation}
	\frac{d\nu^y}{d\nu_0}=\frac{1}{Z}\exp\{-\Phi(\mu;y)-\R(\mu)\}.\label{post}
\end{equation}  Where $Z=\int_{\mathcal{X}}\exp\{-\Phi(\mu;y)-\R(\mu)\}\nu_0(d\mu)$ is the normalized constant, and $\nu_0(\mathcal{X})=1$. 
\subsubsection{Theoretical properties of the NWATG prior}\label{subsubsec:proof}
In this section, we  demonstrate that the NWATG prior yields a well-behaved posterior distribution in $\mathcal{X}$, where the detailed proofs can be found in \cite{Stuart2010} and \cite{Yao2016}.

In fact, similar to \cite{Yao2016}, we propose the following lemma, which shows the prior is well-defined.

\begin{lemma}\label{Lemma1}
	
	The regularization functional $\R:\mathcal{X}\to\mathbb{R}$ satisfies the following properties:\\
	(1) For any $\mu\in\mathcal{X}$, $\R(\mu)$ is bounded from below, and without loss of generality we can simply assume $\R(\mu)\ge 0$.\\
	(2) For any $\delta>0$, there exists $M=M(\delta)>0$ such that $\R(\mu)\le M$ holds for all $\mu\in\mathcal{X}$ satisfying $\|\mu\|_{\mathcal{X}}\le\delta$.\\
	(3) For any $\delta>0$, there exists $L=L(\delta)>0$, such that  $$|\R(\mu_1)-\R(\mu_2)|\le L\|\mu_1-\mu_2\|_{\mathcal{X}}$$for all $\mu_1,\mu_2\in\mathcal{X}$ with $\max\{\|\mu_1\|_{\mathcal{X}},\|\mu_2\|_{\mathcal{X}}\}\le\delta$.
\end{lemma}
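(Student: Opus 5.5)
Since $\mathcal{X}=\mathcal{W}^{1,\infty}_\eta$ carries the $\mathcal{W}^{1,\infty}$ norm, I would control every quantity by $\|\nabla\mu\|_{L^\infty}$ and the finite area $S_\Omega$. The starting point is a pointwise description of the integrand,
\begin{equation*}
\R(\mu)=\lambda\int_\Omega \bigl(g(\partial_1\mu)+g(\partial_2\mu)\bigr)\,dx,\qquad g(t)=\frac{|t|}{t^2+\beta}.
\end{equation*}
Here I read $|p\nabla\mu|$ as the anisotropic sum $|p_1\partial_1\mu|+|p_2\partial_2\mu|$. If the Euclidean norm is meant instead, the same argument goes through up to a factor $\sqrt2$. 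Everything then rests on two elementary properties of the scalar function $g$ with $\beta>0$. First, $0\le g(t)\le \frac{1}{2\sqrt\beta}$, because $t^2+\beta\ge 2\sqrt\beta|t|$. Second, $g$ is globally Lipschitz. Away from $0$ we have $g'(t)=\sgn(t)\frac{\beta-t^2}{(t^2+\beta)^2}$, so $|g'(t)|\le \frac{1}{\beta}$; since $g$ is continuous at $0$, this gives $|g(s)-g(t)|\le \beta^{-1}|s-t|$.

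For (1), the integrand is nonnegative, so $\R(\mu)\ge0$ immediately, and the integral is finite because $\R(\mu)\le \lambda S_\Omega/\sqrt\beta$.

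For (2), the same bound is uniform in $\mu$. I would take $M=\lambda S_\Omega/\sqrt\beta$, which is independent of $\delta$. Alternatively, $g(t)\le |t|/\beta$ gives $M(\delta)=2\lambda S_\Omega\delta/\beta$, since $\|\partial_i\mu\|_{L^\infty}\le\|\mu\|_{\mathcal{X}}\le\delta$.

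For (3), I would estimate pointwise using the Lipschitz bound on $g$:
\begin{equation*}
|\R(\mu_1)-\R(\mu_2)|\le \lambda\sum_{i=1}^2\int_\Omega |g(\partial_i\mu_1)-g(\partial_i\mu_2)|\,dx\le \frac{\lambda}{\beta}\sum_{i=1}^2\int_\Omega|\partial_i(\mu_1-\mu_2)|\,dx\le \frac{2\lambda S_\Omega}{\beta}\|\mu_1-\mu_2\|_{\mathcal{X}}.
\end{equation*}
This gives $L=2\lambda S_\Omega/\beta$, again uniform in $\delta$. The one point needing care is measurability and weak differentiability: $g$ is Lipschitz, and for $\mu\in\mathcal{W}^{1,\infty}$ the derivatives $\partial_i\mu$ are $L^\infty$ functions, so $g(\partial_i\mu)$ is measurable and bounded. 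No chain rule for Sobolev functions is needed, because $g$ acts on the derivatives, not on $\mu$ itself.

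The main obstacle is therefore not analytic but interpretive. The weights $p$ depend on $\mu$ itself, so $\R$ is nonconvex, and one might worry that the Lipschitz estimate fails. The bound $|g'|\le 1/\beta$ is what resolves this. So the crux of the proof is verifying that bound carefully, including the nondifferentiable point $t=0$, handled by continuity, and stating which norm on $p\nabla\mu$ is used.
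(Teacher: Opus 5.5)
Your proof is correct, but it follows a different route from the paper's. The paper keeps the weight and the derivative separate. For (2) it uses $0<p_i\le 1/\beta$ and Cauchy--Schwarz to get the $\delta$-dependent bound $\R(\mu)\le \lambda\sqrt{2S_\Omega}\,\|\mu\|_{\mathcal{X}}/\beta$. For (3) it adds and subtracts, $p_1(\mu_1)\partial_1\mu_1-p_1(\mu_2)\partial_1\mu_2=p_1(\mu_1)\partial_1(\mu_1-\mu_2)+\bigl(p_1(\mu_1)-p_1(\mu_2)\bigr)\partial_1\mu_2$, and bounds the weight difference by $|\partial_1(\mu_1+\mu_2)|\,|\partial_1(\mu_1-\mu_2)|/\beta^2$. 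It then controls the leftover factor $|\partial_1(\mu_1+\mu_2)|^2|\partial_1\mu_2|^2\le 4\eta^4$ through the constraint $|\partial_i\mu_j|<\eta$ built into $\mathcal{X}=\mathcal{W}^{1,\infty}_\eta$. This ends with $|\R(\mu_1)-\R(\mu_2)|^2\le 4\lambda^2S_\Omega(\beta^2+4\eta^4)\beta^{-4}\|\mu_1-\mu_2\|_{\mathcal{X}}^2$. That product-rule splitting is the generic way to handle gradient-dependent weights, and it is why the bound $\eta$ is needed there. Your observation exploits the specific form of $p$ instead: the full integrand $g(t)=|t|/(t^2+\beta)$ is globally bounded by $1/(2\sqrt\beta)$ and globally $1/\beta$-Lipschitz. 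This gives constants $M$, $L$ independent of $\delta$ and $\eta$, which shows the gradient constraint in $\mathcal{X}$ is not needed for this lemma at all.

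One point of alignment: although $\mathcal{X}$ is defined inside $\mathcal{W}^{1,\infty}$, the paper's proof measures $\|\mu\|_{\mathcal{X}}^2=\int_\Omega(|\partial_1\mu|^2+|\partial_2\mu|^2+\mu^2)$, i.e.\ the $H^1$ norm. Under that reading, the last step of your (3) should use $\int_\Omega|\partial_i(\mu_1-\mu_2)|\le\sqrt{S_\Omega}\,\|\partial_i(\mu_1-\mu_2)\|_{L^2}$ instead, giving $L=\lambda\sqrt{2S_\Omega}/\beta$. Your alternative $M(\delta)$, which uses $\|\partial_i\mu\|_{L^\infty}\le\|\mu\|_{\mathcal{X}}$, is then no longer available. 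Your primary bound $M=\lambda S_\Omega/\sqrt\beta$ does not depend on the norm, so the conclusion stands either way.
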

\begin{proof}
	Obviously, (1) is true. Next prove (2). $$\R(\mu)^2=\left(\lambda\int_{\Omega}(|p_{1}\partial_{1}\mu|+|p_{2}\partial_{2}\mu|)dx_1dx_2\right)^2.$$By fundamental inequality, we can obtain$$\R(\mu)^2\le2\lambda^2\left(\left(\int_{\Omega}|p_1\partial_{1}\mu|dx_1dx_2\right)^2+\left(\int_{\Omega}|p_2\partial_{2}\mu|dx_1dx_2\right)^2\right).$$ By Cauchy's integral inequality, we obtain$$\R(\mu)^2\le2\lambda^2S_{\Omega}\left(\int_{\Omega}|p_1|^2\cdot|\partial_{1}\mu|^2dx_1dx_2+\int_{\Omega}|p_2|^2\cdot|\partial_{2}\mu|^2dx_1dx_2\right).$$Obviously, $0< p_1\le\frac{1}{\beta},0< p_2\le\frac{1}{\beta}$, and we can get$$\begin{aligned}
		\R(\mu)^2\le&\frac{2\lambda^2S_{\Omega} }{\beta^2}\int_{\Omega}\left(|\partial_{1}\mu|^2+|\partial_{2}\mu|^2\right)dx_1dx_2\\
		\le&\frac{2\lambda^2S_{\Omega} }{\beta^2}\left(\int_{\Omega}\left(|\partial_{1}\mu|^2+|\partial_{1}\mu|^2\right)dx_1dx_2+\int_{\Omega}\mu^2dx_1dx_2\right)\\
		=&\frac{2\lambda^2S_{\Omega}}{\beta^2}\|\mu\|_{\mathcal{X}}^2.
	\end{aligned}$$Then, we can obtain the following$$\R(\mu)\le\frac{\lambda\sqrt{2S_{\Omega}}}{\beta}\|\mu\|_{\mathcal{X}}.$$For given $\delta>0$ and $M=\frac{\lambda\sqrt{2S_{\Omega}}\delta}{\beta}$, if $\|\mu\|_{\mathcal{X}}\le\delta$, we have$$\R(\mu)\le M.$$
	
	Next proves (3). $$\begin{aligned}
		&|\R(\mu_1)-\R(\mu_2)|^2\\
		=&\lambda^2\left\vert\int_{\Omega}(|p_1(\mu_1)\partial_{1}\mu_1|+|p_2(\mu_1)\partial_{2}\mu_1|)dx_1dx_2\right.\\
		-&\left.\int_{\Omega}(|p_1(\mu_2)\partial_{1}\mu_2|+|p_2(\mu_2)\partial_{2}\mu_2|)dx_1dx_2\right\vert^2\\
		=&\lambda^2\left|\int_{\Omega}(|p_1(\mu_1)\partial_{1}\mu_1|-|p_1(\mu_2)\partial_1\mu_2|)dx_1dx_2\right.\\
		+&\left.\int_{\Omega}(|p_2(\mu_1)\partial_{2}\mu_1|-|p_2(\mu_2)\partial_2\mu_2|)dx_1dx_2\right|^2
	\end{aligned}$$By fundamental inequality, we can obtain$$\begin{aligned}
		&|\R(\mu_1)-\R(\mu_2)|^2\\
		\le& 2\lambda^2\left(\int_{\Omega}(|p_1(\mu_1)\partial_{1}\mu_1|-|p_1(\mu_2)\partial_{1}\mu_2|)dx_1dx_2\right)^2\\
		+&2\lambda^2\left(\int_{\Omega}(|p_2(\mu_1)\partial_{2}\mu_1|-|p_2(\mu_2)\partial_{2}\mu_2|)dx_1dx_2\right)^2.
	\end{aligned}$$By Cauchy's integral inequality, we obtain$$\begin{aligned}
		&|\R(\mu_1)-\R(\mu_2)|^2\\
		\le&2\lambda^2S_{\Omega}\int_{\Omega}(|p_1(\mu_1)\partial_{1}\mu_1|-|p_1(\mu_2)\partial_{1}\mu_2|)^2dx_1dx_2\\
		+&2\lambda^2S_{\Omega}\int_{\Omega}(|p_2(\mu_1)\partial_{2}\mu_1|-|p_2(\mu_2)\partial_{2}\mu_2|)^2dx_1dx_2.
	\end{aligned}$$By triangle inequality, $\big||a|-|b|\big|\le|a-b|$, and we obtain $$\begin{aligned}
		&|\R(\mu_1)-\R(\mu_2)|^2\\
		\le&2\lambda^2S_{\Omega}\int_{\Omega}(p_1(\mu_1)\partial_{1}\mu_1-p_1(\mu_2)\partial_{1}\mu_2)^2dx_1dx_2\\
		+&2\lambda^2S_{\Omega}\int_{\Omega}(p_2(\mu_1)\partial_{2}\mu_1-p_2(\mu_2)\partial_{2}\mu_2)^2dx_1dx_2,
	\end{aligned}$$Therefore, 	$$\begin{aligned}
		&|\R(\mu_1)-\R(\mu_2)|^2\\
		\le&2\lambda^2S_{\Omega}\int_{\Omega}(p_1(\mu_1)\partial_{1}\mu_1-p_1(\mu_1)\partial_{1}\mu_2+p_1(\mu_1)\partial_{1}\mu_2-p_1(\mu_2)\partial_{1}\mu_2)^2dx_1dx_2\\
		+&2\lambda^2S_{\Omega}\int_{\Omega}(p_2(\mu_1)\partial_{2}\mu_1-p_2(\mu_1)\partial_{2}\mu_2+p_2(\mu_1)\partial_{2}\mu_2-p_2(\mu_2)\partial_{2}\mu_2)^2dx_1dx_2.
	\end{aligned}$$Due to the fundamental inequality, we can obtain	$$\begin{aligned}
		&|\R(\mu_1)-\R(\mu_2)|^2\\
		\le&4\lambda^2S_{\Omega}\int_{\Omega}\int_{\Omega}\left(p_1(\mu_1)^2(\partial_{1}\mu_1-\partial_{1}\mu_2)^2+(p_1(\mu_1)-p_1(\mu_2))^2\partial_{1}\mu_2^2\right)dx_1dx_2\\
		+&4\lambda^2S_{\Omega}\int_{\Omega}\int_{\Omega}\left(p_2(\mu_1)^2(\partial_{2}\mu_1-\partial_{2}\mu_2)^2+(p_2(\mu_1)-p_2(\mu_2))^2\partial_{2}\mu_2^2\right)dx_1dx_2.
	\end{aligned}$$Moreover, $$\begin{aligned}
		p_1(\mu_1)-p_1(\mu_2)=&\frac{1}{|\partial_1\mu_1|^2+\beta}-\frac{1}{|\partial_{1}\mu_2|^2+\beta}\\
		=&\frac{(\partial_{1}\mu_2)^2-(\partial_{1}\mu_1)^2}{((\partial_{1}\mu_1)^2+\beta)(\partial_{1}\mu_2)^2+\beta}\\
		\le&\frac{\partial_{1}(\mu_1+\mu_2)\partial_{1}(\mu_2-\mu_1)}{\beta^2}.
	\end{aligned}$$In the same way, we can get $$p_2(\mu_1)-p_2(\mu_2)\le\frac{\partial_{2}(\mu_1+\mu_2)\partial_{2}(\mu_2-\mu_1)}{\beta^2}.$$It is easy to show$$\begin{aligned}
		&|\R(\mu_1)-\R(\mu_2)|^2\\
		\le&4\lambda^2S_{\Omega}\int_{\Omega}\left(\frac{1}{\beta^2}+\frac{\partial_{1}(\mu_1+\mu_2)^2\partial_{1}\mu_2^2}{\beta^4}\right)\partial_{1}(\mu_1-\mu_2)^2dx_1dx_2\\
		+&4\lambda^2S_{\Omega}\int_{\Omega}\left(\frac{1}{\beta^2}+\frac{\partial_{2}(\mu_1+\mu_2)^2\partial_{2}\mu_2^2}{\beta^4}\right)\partial_{2}(\mu_1-\mu_2)^2dx_1dx_2.
	\end{aligned}$$For any $\mu_1,\mu_2\in\mathcal{X}$, there exists $\eta>0$, such that $|\partial_i\mu_j|<\eta,i,j=1,2.$ Obviously, $|\partial_i(\mu_1+\mu_2)|\le|\partial_i\mu_1|+|\partial_i\mu_2|<2\eta,i=1,2.$ It follows immediately that\begin{equation}
		\begin{aligned}
			&|\R(\mu_1)-\R(\mu_2)|^2\\
			\le&\frac{4\lambda^2(4\eta^4+\beta^2)}{\beta^4}S_{\Omega}\left(\int_{\Omega}\partial_{1}(\mu_1-\mu_2)^2+\partial_{2}(\mu_1-\mu_2)^2\right)dx_1dx_2\\
			\le&\frac{4\lambda^2(4\eta^4+\beta^2)}{\beta^4}S_{\Omega}\left(\int_{\Omega}\partial_{1}(\mu_1-\mu_2)^2+\partial_{2}(\mu_1-\mu_2)^2+(\mu_1-\mu_2)^2\right)dx_1dx_2\\
			=&\frac{4\lambda^2(4\eta^4+\beta^2)}{\beta^4}S_{\Omega}\|\mu_1-\mu_2\|_{\mathcal{X}}^2.
		\end{aligned}
	\end{equation}Let $L=\frac{4\lambda^2(4\eta^4+\beta^2)}{\beta^4}S_{\Omega}$, we obtain $$|\R(\mu_1)-\R(\mu_2)|\le L\|\mu_1-\mu_2\|_{\mathcal{X}}.$$We need to use fact that for any $a\ge0,b\ge0,|\exp(-a)-\exp(-b)|\le\min\{1,|a-b|\}$. Then, the prior measure is well-defined, with \begin{equation}
		|\exp(-\R(\mu_1))-\exp(-\R(\mu_2))|\le L\|\mu_1-\mu_2\|_{\mathcal{X}}.\label{eR_lip}
	\end{equation}Moreover, $$Z=\int_{\mathcal{X}}\exp(-\R(\mu))d\nu_0(\mu)\le\nu_0(\mathcal{X})=1<\infty.$$The prior measure is normalizable, and the conclusion holds under the stated conditions.
\end{proof}

Actually, with the Assumption \ref{assum_Phi}, it can be proved that the posterior measure is well-defined.

\begin{assumption}\label{assum_Phi}
	
	Assume $\Phi:\mathcal{X}\times\mathcal{Y}\to\mathbb{R}$ satisfies the following properties:\\
	(1) For every $\mu\in\mathcal{X}$ and every $y\in\mathcal{Y}$, $\Phi(\mu;y)$ is bounded below, and without loss of generality we can simply assume $\Phi(\mu;y)\ge 0$.\\
	(2) For any $r>0$, there is an $k(r)>0$, such that for all $\mu\in\mathcal{X},\ y\in\mathcal{Y}$ with $\max\{\|\mu\|_{\mathcal{X}},\|y\|_{\mathcal{Y}}\}<r$, $$\Phi(\mu;y)\le k.$$\\
	(3) For any $r>0$, there exists an $L(r)>0$, such that for all $\mu_1,\mu_2\in\mathcal{X},y\in\mathcal{Y}$ with $\max\{\|\mu_1\|_{\mathcal{X}},\ \|\mu_2\|_{\mathcal{X}},\|y\|_{\mathcal{Y}}\}<r$, $$|\Phi(\mu_1;y)-\Phi(\mu_2;y)|\le L\|\mu_1-\mu_2\|_{\mathcal{X}}.$$\\
	(4) For any $\epsilon>0,r>0$, there exists $C=C(\epsilon,r)\in\mathbb{R}$, such that for all $y_1,\ y_2\in\mathcal{Y}$ with $\max\{\|y_1\|_{\mathcal{Y}},\|y_2\|_{\mathcal{Y}}\}<r$, for all $\mu\in\mathcal{X}$, $$|\Phi(\mu;y_1)-\Phi(\mu;y_2)|\le\exp(\epsilon\|\mu\|_{\mathcal{X}}^2+C)\|y_1-y_2\|_{\mathcal{Y}}.$$
\end{assumption}In fact, the data fidelity term $\Phi$ satisfies Assumption \ref{assum_Phi}. This is essentially equivalent to $\Phi+\R$ satisfies Assumption 2.6 in \cite{Stuart2010}, only slightly different in the first one.

Then, the following theorem can be proved. \begin{theorem}\label{post_well}
	Assume $\Phi(\mu;y)$ satisfies Assumption \ref{assum_Phi}, and $\nu_0$ is a Gaussian measure with $\nu_0(\mathcal{X})=1$. Then we have the following:
	
	(1)  The posterior measure $\nu^y$ is well-defined.
	
	(2) $\nu^y$ is Lipschitz continuous in the data $y$ with respect to the Hellinger distance. Specifically, for any $y\in\mathcal{Y}$, $\nu^y\ll\nu_0$ with Radon-Nikodym derivative given by \eqref{likehood},  $\nu^y,\nu^{y^{'}}$ represent the measures of $y$ and $y^{'}$, respectively. Then for every $r>0$, there exists $M=M(r)>0$, such that for all $y,y^{'}\in\mathcal{Y}$ with $\max\{\|y\|_{\mathcal{Y}},\|y^{'}\|_{\mathcal{Y}}\}<r$, the $\mathrm{Hellinger}$ distance\cite{Kuo1975} satisfies  $$d_{Hell}(\nu^y,\nu^{y^{'}})\le M\|y-y^{'}\|_{\mathcal{Y}}.$$
\end{theorem}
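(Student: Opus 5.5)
The plan is to follow the template of Theorems 4.1--4.2 of \cite{Stuart2010} (as adapted in \cite{Yao2016}), applied to the combined potential $\Psi(\mu;y):=\Phi(\mu;y)+\R(\mu)$. By \eqref{post}, $\frac{d\nu^y}{d\nu_0}=\frac1Z\exp\{-\Psi(\mu;y)\}$. Lemma \ref{Lemma1} and Assumption \ref{assum_Phi} together show that $\Psi$ satisfies the Stuart conditions. First, $\Psi\ge 0$ by Lemma \ref{Lemma1}(1) and Assumption \ref{assum_Phi}(1). Second, $\Psi\le k(r)+M(r)$ on balls. Third, $\Psi$ is locally Lipschitz in $\mu$ with constant $L_\Phi(r)+L_\R(r)$. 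Fourth, since $\R$ does not depend on $y$, we have $|\Psi(\mu;y_1)-\Psi(\mu;y_2)|=|\Phi(\mu;y_1)-\Phi(\mu;y_2)|$, so Assumption \ref{assum_Phi}(4) transfers verbatim.

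\textbf{Part (1).} I will show that $0<Z<\infty$ and that $\mu\mapsto\Psi(\mu;y)$ is $\nu_0$-measurable.

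Measurability follows from the local Lipschitz property (3), which makes $\Psi$ continuous on $\mathcal{X}$. The upper bound is immediate: $\Psi\ge0$ gives $Z\le\nu_0(\mathcal{X})=1$.

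For the lower bound, fix $r>\|y\|_{\mathcal{Y}}$ and let $B_r=\{\|\mu\|_{\mathcal{X}}<r\}$. Then
\[
Z\ge\int_{B_r}\exp\{-k(r)-M(r)\}\,\nu_0(d\mu)=\exp\{-k(r)-M(r)\}\,\nu_0(B_r).
\]
Because $\nu_0(\mathcal{X})=1$, we have $\nu_0(B_r)\to1$ as $r\to\infty$, so some $r$ gives $\nu_0(B_r)>0$ and hence $Z>0$. Therefore $\nu^y$ is a well-defined probability measure with $\nu^y\ll\nu_0$.

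\textbf{Part (2).} Write $Z=Z(y)$ and $Z'=Z(y')$, both bounded below by a constant $c(r)>0$ that is uniform for $\|y\|,\|y'\|<r$ (the same argument as in Part (1)).

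Using $|e^{-a}-e^{-b}|\le|a-b|$ for $a,b\ge0$ together with (4), we get
\[
|Z-Z'|\le\int\exp(\epsilon\|\mu\|_{\mathcal{X}}^2+C)\,\nu_0(d\mu)\,\|y-y'\|_{\mathcal{Y}}.
\]

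Next, split the Hellinger distance as $2d_{Hell}^2\le I_1+I_2$, where
\[
I_1=\frac{2}{Z}\int\bigl(e^{-\Psi(\mu;y)/2}-e^{-\Psi(\mu;y')/2}\bigr)^2\,\nu_0(d\mu),\qquad
I_2=2\,|Z^{-1/2}-Z'^{-1/2}|^2\,Z'.
\]
Applying the same elementary inequality to the integrand of $I_1$ gives $I_1\le \frac{1}{2c}\int\exp(2\epsilon\|\mu\|^2+2C)\,d\nu_0\,\|y-y'\|^2$. For $I_2$, the mean value bound $|Z^{-1/2}-Z'^{-1/2}|\le \tfrac12 c^{-3/2}|Z-Z'|$ gives the same quadratic order in $\|y-y'\|$. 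Taking square roots yields $d_{Hell}(\nu^y,\nu^{y'})\le M(r)\|y-y'\|_{\mathcal{Y}}$.

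\textbf{Main obstacle.} The delicate step is ensuring that $\int\exp(2\epsilon\|\mu\|_{\mathcal{X}}^2)\,\nu_0(d\mu)<\infty$. My first attempt is Fernique's theorem: since $\nu_0$ is Gaussian with $\nu_0(\mathcal{X})=1$, this integral is finite for all sufficiently small $\epsilon>0$. Assumption \ref{assum_Phi}(4) allows us to choose $\epsilon$ that small.

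A subtlety is that $\mathcal{X}=\mathcal{W}^{1,\infty}_\eta$ is not a separable Banach space, so Fernique cannot be applied directly. I would handle this by taking the norm in (4) to be a measurable seminorm that is finite $\nu_0$-a.s. Concretely, one can apply Fernique in a separable Banach space $E\supset\mathcal{X}$ carrying $\nu_0$, with $\|\cdot\|_{\mathcal{X}}$ controlled by $\|\cdot\|_E$ on the support. In the discretized setting actually used in Algorithm \ref{algorithm:CM}, where $\mathcal{X}$ is finite-dimensional, this reduces to finiteness of a Gaussian exponential moment, which holds trivially for small $\epsilon$.
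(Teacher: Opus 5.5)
Your proposal is correct and takes the same route as the paper. The paper notes that $\Phi+\R$ satisfies Assumption 2.6 of \cite{Stuart2010} (by Lemma \ref{Lemma1} and Assumption \ref{assum_Phi}) and then cites Theorems 4.1--4.2 there without proof; you have written out exactly that argument. On your Fernique concern, the norm $\|\cdot\|_{\mathcal{X}}$ actually used in Lemma \ref{Lemma1} is the $H^1(\Omega)$ norm, so taking $E=H^1(\Omega)$ (a separable Hilbert space) settles the exponential-moment step directly.
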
The conclusion coincides with Theorem 4.2 in \cite{Stuart2010}; therefore, the proof is omitted here.

In fact, a finite-dimensional approximation of $\mu$ is employed. Let $\mathbf{u}$ represent the $n$-dimensional approximation of $\mu$; then the forward problem can be written as $$\mathbf{y}=A\mathbf{u}+\mathbf{e},$$where $A$ is the discrete matrix of the forward  operator $\mathcal{F}$, and $\mathbf{y},\mathbf{e}\in\mathbb{R}^m$ represent the $m$-dimensional approximations of $y,e$, respectively. The convergence of the posterior measure for $\mathbf{u}$ is established in Theorem \ref{finit_1}.
\begin{theorem}\label{finit_1}
	Let $\{e_k\}_{k=1}^{\infty}$ represent a complete orthonormal basis of $\mathcal{X}$. The $n$-dimensional approximation of $\mu$ is given by $$\mathbf{u}=\sum_{k=1}^n\langle\mu,e_k\rangle e_k.$$
	The corresponding posterior measure $\nu_n^y$ for $\mathbf{u}$ then admits the Radon-Nikodym derivative
	\begin{equation}
		\frac{d\nu^y_n}{d\nu_0}=\frac{1}{Z_n}\exp\{-\Phi(\mathbf{u};\mathbf{y})-\mathrm{R}(\mathbf{u})\},\label{post_n}
	\end{equation}where $Z_n$ is the normalization constant. If $\Phi$ satisfies the Assumption \ref{assum_Phi} and $\mathrm{R}$ satisfies the Lemma \ref{Lemma1}, then the Hellinger distance between the posterior $\nu^y$ and its finite-dimensional approximation $\nu_n^y$ converges to zero, i.e., $$d_{Hell}(\nu^y,\nu^y_n)\to 0,\qquad \text{as}\ n\to\infty.$$
\end{theorem}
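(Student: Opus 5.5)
We follow the standard Stuart-type approximation argument (cf. Theorem 4.6 in \cite{Stuart2010}), writing $\Psi(\mu):=\Phi(\mu;y)+\R(\mu)$ and $\Psi_n(\mu):=\Phi(P_n\mu;\mathbf{y})+\R(P_n\mu)$. Here $P_n\mu=\sum_{k=1}^n\langle\mu,e_k\rangle e_k$, so that $\mathbf{u}=P_n\mu$. Both posteriors are taken relative to the same Gaussian reference measure $\nu_0$, with densities $Z^{-1}e^{-\Psi}$ and $Z_n^{-1}e^{-\Psi_n}$.

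\textbf{Step 1: normalization constants.} Since $\Phi\ge0$ (Assumption \ref{assum_Phi}(1)) and $\R\ge0$ (Lemma \ref{Lemma1}(1)), we have $Z,Z_n\le\nu_0(\mathcal{X})=1$. For a lower bound, fix $r>0$ with $\nu_0(\{\|\mu\|_{\mathcal{X}}\le r\})>0$; such $r$ exists because $\nu_0(\mathcal{X})=1$. Since $\|P_n\mu\|_{\mathcal{X}}\le\|\mu\|_{\mathcal{X}}$, Assumption \ref{assum_Phi}(2) and Lemma \ref{Lemma1}(2) give $\Psi_n\le k(r)+M(r)$ on this ball, uniformly in $n$. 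Hence $Z,Z_n\ge e^{-k-M}\nu_0(\|\mu\|\le r)>0$, uniformly in $n$.

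\textbf{Step 2: Hellinger estimate.} Use the standard splitting
\[
2d_{Hell}(\nu^y,\nu^y_n)^2\le \frac{2}{Z}\int\Bigl(e^{-\Psi/2}-e^{-\Psi_n/2}\Bigr)^2d\nu_0+2|Z^{-1/2}-Z_n^{-1/2}|^2Z_n .
\]
The second term is controlled by $|Z-Z_n|\le\int|e^{-\Psi}-e^{-\Psi_n}|\,d\nu_0$ together with the uniform lower bounds from Step 1. So everything reduces to showing $\int|e^{-\Psi/2}-e^{-\Psi_n/2}|^2d\nu_0\to0$, and the analogous integral without the square root.

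\textbf{Step 3: pointwise convergence and domination.} For fixed $\mu$, set $r=\|\mu\|_{\mathcal{X}}$. Since $\|P_n\mu\|\le r$, Assumption \ref{assum_Phi}(3) and Lemma \ref{Lemma1}(3) yield
\[
|\Psi(\mu)-\Psi_n(\mu)|\le (L_\Phi(r)+L_\R(r))\|\mu-P_n\mu\|_{\mathcal{X}}.
\]
Here we identify $\mathbf{y}$ with $y$. If the data are also discretized, Assumption \ref{assum_Phi}(4) adds a term $\exp(\epsilon\|\mu\|^2+C)\|y-\mathbf{y}\|$, which vanishes as $\mathbf{y}\to y$. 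Completeness of $\{e_k\}$ gives $\|\mu-P_n\mu\|_{\mathcal{X}}\to0$, so the integrand tends to $0$ pointwise. Using $|e^{-a}-e^{-b}|\le\min\{1,|a-b|\}$ for $a,b\ge0$, the integrands are bounded by $1$, which is $\nu_0$-integrable. Dominated convergence then finishes the argument. This avoids needing Fernique-type integrability of the growing constants $L(r)$.

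\textbf{Main obstacle.} The delicate point is Step 3's use of $\mathcal{X}=\mathcal{W}^{1,\infty}_\eta$ as if it were a Hilbert space, where $P_n$ is an orthogonal projection that is nonexpansive and converges strongly. In the Lemma \ref{Lemma1} computations the norm $\|\cdot\|_{\mathcal{X}}$ is effectively the $H^1$ norm. I would therefore make this explicit by working in $H^1$ with the gradient bound $\eta$ as a side constraint. Note that the constant $L$ in Lemma \ref{Lemma1}(3) depends on $\eta$. It must also be preserved under $P_n$, or at least under $P_n\mu$ for $\nu_0$-a.e.\ $\mu$ and large $n$. If that fails, I would first try replacing $\eta$ by a local bound depending on $\|\mu\|$, which is still enough for pointwise convergence.
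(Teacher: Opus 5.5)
\paragraph{Comparison with the paper.} The paper gives no proof of its own for this theorem. It defers to \cite{Stuart2010} and \cite{Yao2016}, where convergence is proved quantitatively: one assumes a bound $|\Psi(\mu)-\Psi_N(\mu)|\le\exp(\epsilon\|\mu\|_{\mathcal{X}}^2+M)\psi(N)$ with $\psi(N)\to0$, integrates it against $\nu_0$ using Fernique's theorem, and obtains $d_{Hell}\le C\psi(N)$. Your Steps 1 and 2 follow the same skeleton. Your Step 3 is different: you replace the rate-plus-Fernique estimate by pointwise convergence and dominated convergence with the trivial majorant $1$.

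That step is correct, and it suits the theorem as stated better than the cited route does. The statement claims only $d_{Hell}\to0$. Moreover, the truncation $P_n$ gives no uniform rate for $\|\mu-P_n\mu\|_{\mathcal{X}}$ unless $\nu_0$-typical $\mu$ have extra regularity, so the quantitative argument would need a hypothesis the paper never states. What you give up is the rate.

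\paragraph{The obstacle you flag.} It is genuine, and neither of your proposed remedies works as stated.
\begin{itemize}
\item Replacing $\eta$ by a constant depending on $\|\mu\|_{\mathcal{X}}$ fails because an $H^1$ bound on $\mu$ gives no control of $\|\nabla P_n\mu\|_{L^\infty}$.
\item Keeping $\|\nabla\mu\|_{L^\infty}<\eta$ as a side constraint while requiring $\nu_0(\mathcal{X})=1$ forces $\nu_0$ to be degenerate. For $\phi\in L^2(\Omega)$, the functional $\mu\mapsto\int_\Omega\partial_1\mu\,\phi$ is continuous on $H^1$, so it is a centred Gaussian under $\nu_0$. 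On that set it is bounded by $\eta\|\phi\|_{L^1}$, so its variance must vanish for every $\phi$. Hence $\nabla\mu=0$ $\nu_0$-a.s.
\end{itemize}

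\paragraph{The fix.} Drop $\eta$ entirely. Write $g(t)=t/(t^2+\beta)$, so that $\R(\mu)=\lambda\int_\Omega(|g(\partial_1\mu)|+|g(\partial_2\mu)|)$. Then $|g'(t)|=|\beta-t^2|/(t^2+\beta)^2\le 1/\beta$, and by Cauchy--Schwarz
\[
|\R(\mu_1)-\R(\mu_2)|\le\frac{\lambda}{\beta}\int_\Omega\bigl(|\partial_1(\mu_1-\mu_2)|+|\partial_2(\mu_1-\mu_2)|\bigr)\,dx\le\frac{\lambda\sqrt{2S_\Omega}}{\beta}\,\|\mu_1-\mu_2\|_{H^1}
\]
for all $\mu_1,\mu_2\in H^1(\Omega)$. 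Also $0\le\R\le\lambda S_\Omega/\sqrt{\beta}$, since $|g|\le 1/(2\sqrt{\beta})$. So Lemma~\ref{Lemma1} holds on all of $H^1$ with constants independent of $\eta$ and $\delta$.

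Now take $\mathcal{X}=H^1(\Omega)$, $\{e_k\}$ orthonormal in $H^1$, $\nu_0(H^1)=1$, and Assumption~\ref{assum_Phi} posed on $H^1$. The assumption holds, for example, for $\Phi=\frac12\|\F\mu-y\|_{\mathcal{Y}}^2$ with $\F$ bounded on $L^2$. Your Steps 1--3 then go through as written. The only adjustment is to take $r>\max\{\|\mu\|_{\mathcal{X}},\|y\|_{\mathcal{Y}}\}$, because Assumption~\ref{assum_Phi}(2),(3) use strict inequalities and involve $\|y\|_{\mathcal{Y}}$.
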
 Furthermore, we introduce the discretized measure $\nu_{N_1,N_2}^y$ defined via $$	\frac{d\nu^y_{N_1,N_2}}{d\nu_0}\propto\exp(-\Phi_{N_1}(\mu;y)-\R_{N_2}(\mu)),$$where $\Phi_{N_1}(\mu;y)$ and $\R_{N_2}$ denote $N_1$- and $N_2$-dimensional approximations of the data fidelity term $\Phi$ and the regularization term $\R$, respectively. The convergence of measure $\nu_{N_1,N_2}^y$ is established in \cite{Yao2016}. Consequently, Theorem \ref{finit_1} establishes the applicability of the NWATG model to image reconstruction in arbitrary dimensions. 
\begin{remark} 
	For any $\mu\in\mathcal{X}$, the bound  $|\partial_i\mu|<\eta$ holds for $i=1,2$. To bound  $\|\mu\|_{\mathcal{X}}\le\delta$, it suffices to estimate the following inequality:
	$$\int_{\Omega}(|\partial_{1}\mu|^2+|\partial_{2}\mu|^2)dx_1dx_2\le\int_{\Omega}(|\partial_{1}\mu|^2+|\partial_{2}\mu|^2+|\mu|^2)dx_1dx_2=\|\mu\|_{\mathcal{X}}^2 
	. $$ Since $|\partial_{i}\mu|<\eta$, we have
	$$\int_{\Omega}(|\partial_{1}\mu|^2+|\partial_{2}\mu|^2)dx<\int_{\Omega}2\eta^2dx=2\eta^2 S_\Omega.$$
	Thus, a sufficient condition for $\|\mu\|_{\mathcal{X}}\le \delta$ is that $2\eta^2 S_{\Omega} \le \delta^2$, or equivalently, $$\eta^2\le\frac{\delta^2}{2S_{\Omega}}.$$
\end{remark}

In the following section, we detail step S4, which constructs a hybrid model with integrated prior information by incorporating local statistical sampling results into the energy functional minimization process.

\subsection{Reconstruction method incorporating prior knowledge}\label{subsec:S4}

In S4,  the CM estimate from the sampling results is incorporated as prior knowledge into the energy functional minimization to guide the CT image reconstruction. To prevent the smooth attenuation of subtle lesion features commonly encountered in traditional iterative methods,  we integrate the MCMC sampling within the ROI into the reconstruction workflow. This leads to a hybrid reconstruction paradigm that minimizes an energy functional comprising a data fidelity term and boundary-preserving regularization terms.  Specifically, the CM estimate $\mathbf{X}=(X_{i,j})$ from the MCMC sampling in S3  is used as the prior knowledge to construct the regularization term, which is included in the energy functional for minimization. This process includes four stages: construction of local regularizer, construction of composite regularization terms,  global minimization of the energy functional, and imposition of a box constraint. We use the shorthand $\R_1=I_\tau$ and  $\R_2=\mathrm{ROI}\setminus\R_1$, and the segmentation of ROI is shown in Fig. \ref{R12}. 

\begin{figure}[htb]
	\centering
	\includegraphics[width=4cm]{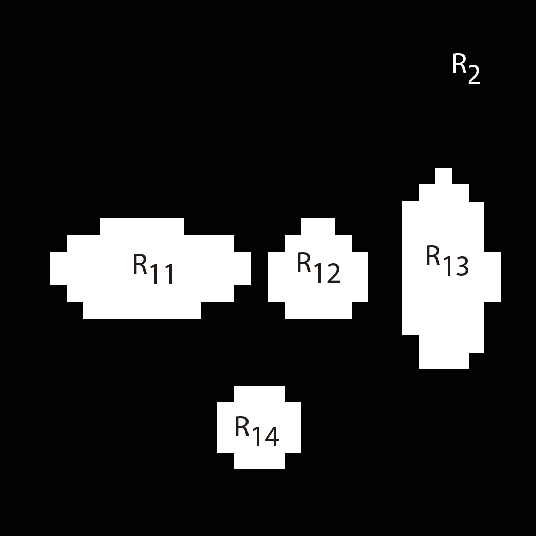}
	\caption{The segmentation of ROI. We take the segmentation of the Shepp-Logan image ROI as an example. $\R_1=\R_{11}\cup \R_{12}\cup \R_{13}\cup \R_{14}$.}\label{R12}
\end{figure}
\begin{itemize}
	\item[Step1.]\ \ \  Construction of local regularizer: According to the index set $I_\tau$ of ROI, we define
	\begin{equation*}
		\mathbf{X}_0=\left\{\begin{aligned}
			&X_{i,j},\quad&(i,j)\in I_\tau,\\
			&0,&(i,j)\in\Omega\setminus I_\tau.
		\end{aligned}\right.
	\end{equation*}Here, $\mathbf{X}_0$ is expressed in vector form.
	\item[Step2.]\ \ \  Construction of composite regularization terms: \par
	(1) Piecewise-constant constraint: Since the image $\mathbf{u}$ is piecewise constant within the ROI, we add the regularization terms $\|\mathbf{\mathcal{D}}\mathbf{u}\|_{\ell_{2}(\R_1)}^2$and $\|\mathbf{\mathcal{D}}\mathbf{u}\|_{\ell_{2}(\R_2)}^2$to the energy functional; \par
	(2) Edge-alignment constraint: To ensure that $\mathbf{u}$ preserves the boundary of the local regularizer $\mathbf{X}_0$ inside the ROI, we add the regularization term $\|\mathbf{\mathcal{D}}(\mathbf{u}-\mathbf{X}_0)\|_{\ell_{2}(\mathrm{ROI})}^2$ to the energy functional;\par
	\item[Step3.]\ \ \  Global minimization of the energy functional:\par
	(1) Formulation of the optimization problem: The reconstruction is obtained by solving \begin{equation}
		\begin{aligned}
			\mathbf{u}^{\ast}=\underset{\mathbf{u}}{\arg\min}&\left\{\frac{1}{2}\|\mathbf{y}-A\mathbf{u}\|_{\ell_{2}}^2+\lambda\|\mathbf{p}\cdot\mathbf{\mathcal{D}}\mathbf{u}\|_{\ell_{1}}+\frac{\rho_1}{2}\|\mathbf{\mathcal{D}}\mathbf{u}\|_{\ell_2(\R_1)}^2\right.\\
			&\left.+\frac{\rho_2}{2}\|\mathbf{\mathcal{D}}\mathbf{u}\|_{\ell_2(\R_2)}^2+\frac{\rho_3}{2}\|\mathbf{\mathcal{D}}(\mathbf{u}-\mathbf{X}_0)\|_{\ell_2(\mathrm{ROI})}^2\right\},
		\end{aligned}\label{g1}\end{equation}
	where $\mathbf{u}\in\mathbb{R}^n$, $\mathbf{y}\in\mathbb{R}^m$,   $\mathbf{p}=\left(\omega(\mathbf{\mathcal{D}}_{x}\mathbf{u});\omega(\mathbf{\mathcal{D}}_{y}\mathbf{u})\right)\in\mathbb{R}^{2n},\mathbf{\mathcal{D}} \mathbf{u}=(\mathbf{\mathcal{D}}_{x}\mathbf{u};\mathbf{\mathcal{D}}_{y}\mathbf{u})\in\mathbb{R}^{2n}$, and $\mathbf{\mathcal{D}}_{x}$, $\mathbf{\mathcal{D}}_{y}\in\mathbb{R}^{n\times n}$ are the first-order difference operators along the $x$
	and $y$ directions, respectively. $\omega(\cdot)=\frac{1}{|\cdot|^{2}+\beta}$ with a small $\beta>0$  to avoid division by zero. \par
	(2) Implementation of the solution method: Define index sets $M$, $M_1$, $M_2$ that take the value $1$ inside $\mathrm{ROI}$, $\R_1$, $\R_2$, respectively, and $0$ elsewhere.  Using element‑wise multiplication ($\odot$), problem \eqref{g1} is equivalent to \begin{equation}
		\begin{aligned}
			\mathbf{u}^{\ast}=\underset{\mathbf{u}}{\arg\min}&\left\{\frac{1}{2}\|\mathbf{y}-A\mathbf{u}\|_{\ell_{2}}^2+\lambda\|\mathbf{p}\cdot\mathbf{\mathcal{D}}\mathbf{u}\|_{\ell_{1}}+\frac{\rho_1}{2}\|\mathbf{\mathcal{D}}(M_1\odot\mathbf{u})\|_{\ell_2}^2\right.\\
			&\left.+\frac{\rho_2}{2}\|\mathbf{\mathcal{D}}(M_2\odot\mathbf{u})\|_{\ell_2}^2+\frac{\rho_3}{2}\|\mathbf{\mathcal{D}}(M\odot(\mathbf{u}-\mathbf{X}_0))\|_{\ell_2}^2\right\}.
		\end{aligned}\label{g2}	\end{equation} The minimization is performed with the Alternating Direction Method of Multipliers (ADMM) \cite{ADMM}.\par
	\item[Step4.]\ \ \  Imposition of a box constraint: The iterative solution is projected onto the interval $[a,b]$ via the operator $\Pi_{[a,b]}$, with $a$ and $b$ being the physical lower and upper limits of $\mu$.
\end{itemize}

To solve the Equation \eqref{g2}, we construct the augmented Lagrangian functional:
\begin{equation}
	\begin{aligned}
		\mathcal{L}(\mathbf{u},\mathbf{d},\mathbf{p},\mathbf{b})&=\frac{1}{2}\|\mathbf{y}-A\mathbf{u}\|_{\ell_{2}}^2+\lambda\|\mathbf{p}\cdot\mathbf{d}\|_{\ell_{1}}+\frac{\rho_1}{2}\|\mathbf{\mathcal{D}}(M_1\odot\mathbf{u})\|_{\ell_2}^2\\
		&+\frac{\rho_2}{2}\|\mathbf{\mathcal{D}}(M_2\odot\mathbf{u})\|_{\ell_2}^2+\frac{\rho_3}{2}\|\mathbf{\mathcal{D}}(M\odot(\mathbf{u}-\mathbf{X}_0))\|_{\ell_2}^2\\
		&+<\mathbf{\mathcal{D}}\mathbf{u}-\mathbf{d},\mathbf{b}>
		+\frac{\rho}{2}\|\mathbf{\mathcal{D}}\mathbf{u}-\mathbf{d}\|_{\ell_{2}}^2,
	\end{aligned}\label{aug_lagrangian}
\end{equation}where $\mathbf{d}$ is the auxiliary variable, $\mathbf{b}$ is the Lagrangian multiplier, and $\rho$ is the scalar
penalty parameter. The minimization of Equation \eqref{aug_lagrangian} is carried out via ADMM.

For ease of calculation, we treat the $n\times 1$ vectors $M$, $M_1$, and $M_2$ as diagonal entries to construct the $n\times n$ diagonal matrices, which we denote by the same symbols $M$, $M_1$, and $M_2$ for simplicity.

Specifically, given an initial guess $(\mathbf{d}^{(0)},\mathbf{p}^{(0)},\mathbf{b}^{(0)})$, the variable $\mathbf{u}$ is updated iteratively according to the following scheme:
\begin{subequations}
	\begin{align}
		\mathbf{u}^{(k+1)}&=\underset{\mathbf{u}}{\arg\min}\mathcal{L}(\mathbf{u},\mathbf{d}^{(k)},\mathbf{p}^{(k)},\mathbf{b}^{(k)});\label{u-sub}\\
		\mathbf{d}^{(k+1)}&=\underset{\mathbf{d}}{\arg\min}\mathcal{L}(\mathbf{u}^{(k+1)},\mathbf{d},\mathbf{p}^{(k)},\mathbf{b}^{(k)});\label{d-sub}\\
		\mathbf{p}^{(k+1)}&=\left(\omega(\mathbf{\mathcal{D}}_{x} \mathbf{u}^{(k+1)});\omega(\mathbf{\mathcal{D}}_{y} \mathbf{u}^{(k+1)})\right);\label{p-sub}\\
		\mathbf{b}^{(k+1)}&=\mathbf{b}^{(k)}+\rho\left(\mathbf{\mathcal{D}} \mathbf{u}^{(k+1)}-\mathbf{d}^{(k+1)}\right);\label{b-sub}
	\end{align}
\end{subequations}
For Equations \eqref{u-sub} and \eqref{d-sub}, the minimizers admit the following closed-form expressions:
\begin{equation}\label{u-box-update}
	\begin{aligned}
		\mathbf{u}^{(k+1)}	=&\left[A^{T}A+\rho\mathbf{\mathcal{D}}^{T}\mathbf{\mathcal{D}}+\rho_1M_{1}^{T}\mathbf{\mathcal{D}}^{T}\mathbf{\mathcal{D}}M_{1}+\rho_{2}M_2^{T}\mathbf{\mathcal{D}}^{T}\mathbf{\mathcal{D}}M_2+\rho_3M^{T}\mathbf{\mathcal{D}}^{T}\mathbf{\mathcal{D}}M\right]^{-1}\\
		& \left[A^{T}\mathbf{y}+\rho\mathbf{\mathcal{D}}^{T}\mathbf{d}^{(k)}-\mathbf{\mathcal{D}}^{T}\mathbf{b}^{(k)}+\rho_3M^{T}\mathbf{\mathcal{D}}^{T}\mathbf{\mathcal{D}}M\tilde{\mathbf{u}}\right],
	\end{aligned}
\end{equation}
and
\begin{equation}\label{d-update}
	\mathbf{d}^{(k+1)}[i]=h_{\frac{\lambda|\mathbf{p}^{(k)}[i]|}{\rho}}\left(\mathbf{\mathcal{D}} \mathbf{u}^{(k+1)}[i]+\frac{1}{\rho}\mathbf{b}^{(k)}[i]\right).
\end{equation}
Here, $\mathcal{I}$ is an $n\times n$ identity matrix; $\mathbf{b}^{(k)}[i]$, $\mathbf{p}^{(k)}[i]$, $\mathbf{u}^{(k+1)}[i]$, and $\mathbf{d}^{(k+1)}[i]$  are the $i$-th components of $\mathbf{b}^{(k)}$, $\mathbf{p}^{(k)}$, $\mathbf{u}^{(k+1)}$, and $\mathbf{d}^{(k+1)}$, respectively; and $h_{g}(\cdot)$ denotes the soft threshold formula, which is defined as  follows\cite{soft}:
\begin{equation*}
	h_{g}(\cdot)=
	\begin{cases}
		\cdot-g\sgn(\cdot),& \mbox{if }|\cdot|>g\\
		0,&\mbox{otherwise},
	\end{cases}
\end{equation*}
where $\sgn$ is the $\mbox{sign}$ function. 

We conclude this section by summarizing the above procedure as a reconstruction algorithm, whose pseudocode is presented in Algorithm \ref{algorithm:hybrid}.

\begin{algorithm}[h]
	\caption{Hybrid Bayesian-Variational Reconstruction via ADMM}
	\label{algorithm:hybrid}
	\begin{algorithmic}[1]
		\REQUIRE {Coefficient matrix A, projection data $\mathbf{y}$, reference image $\tilde{\mathbf{u}}$, The lower and upper bounds of the attenuation coefficient $[a,b]$.\\ \textbf{Parameters}: $\lambda,\beta,\rho,\rho_1,\rho_{2},\rho_3\in\mathbb{R}^{+}$,  a tolerance $\bar{\epsilon}$ and the maximum iteration number $k_{iter}\in\mathbb{Z}^{+}$.}
		
		\ENSURE {the reconstructed image $\mathbf{u}_{\mathrm{final}}$.}
		\STATE {Initialize: $\mathbf{d}^{(0)}=\mathbf{0},\mathbf{p}^{(0)}=(\frac{1}{\beta})\mathbf{1},\mathbf{b}^{(0)}=\mathbf{0},\mathbf{u}^{(0)}=\mathbf{0}.$}
		\FOR{k=1:kmax}
		\STATE {Update $\mathbf{u}^{(k)}$ using (\ref{u-box-update}).}
		\STATE {Update $\mathbf{d}^{(k)}$ using (\ref{d-update}).}
		\STATE {Update $\mathbf{p}^{(k)}$ using (\ref{p-sub}).}
		\STATE {Update $\mathbf{b}^{(k)}$ using (\ref{b-sub}).}
		\IF{$\|\mathbf{u}^{(k)}-\mathbf{u}^{(k-1)}\|_{\ell_{2}}<\bar{\epsilon}$} \STATE{break} \ENDIF
		\ENDFOR
		\STATE {\textbf{Result optimization:} $\mathbf{u}_{\mathrm{final}}=\min\{\max\{\mathbf{u},a\},b\}$.}
	\end{algorithmic}
\end{algorithm}

\section{Experiments}\label{sec:experiment}
In this section, we evaluate the effectiveness of the proposed  hybrid Bayesian-Variational method on three datasets: the Shepp-Logan phantom, the actual walnut CT projection provided by Finnish Inverse Problems Society (http://fips.fi/dataset.php), and the clinical lung CT images provided by The Cancer Imaging Archive  (https://www.cancerimagingarchive.net/collection/lungct-diagnosis/).   Additionally, the  walnut CT data can be found in ZENODO  (https://zenodo.org/record/1254206). A circular inclusion simulating an abnormal lesion (e.g., a tumor) is introduced into the Shepp-Logan phantom. Fig. \ref{figure:ground} displays the experimental data used in this study.
\begin{figure}[h]
	\centering
	\includegraphics[width=0.8\linewidth]{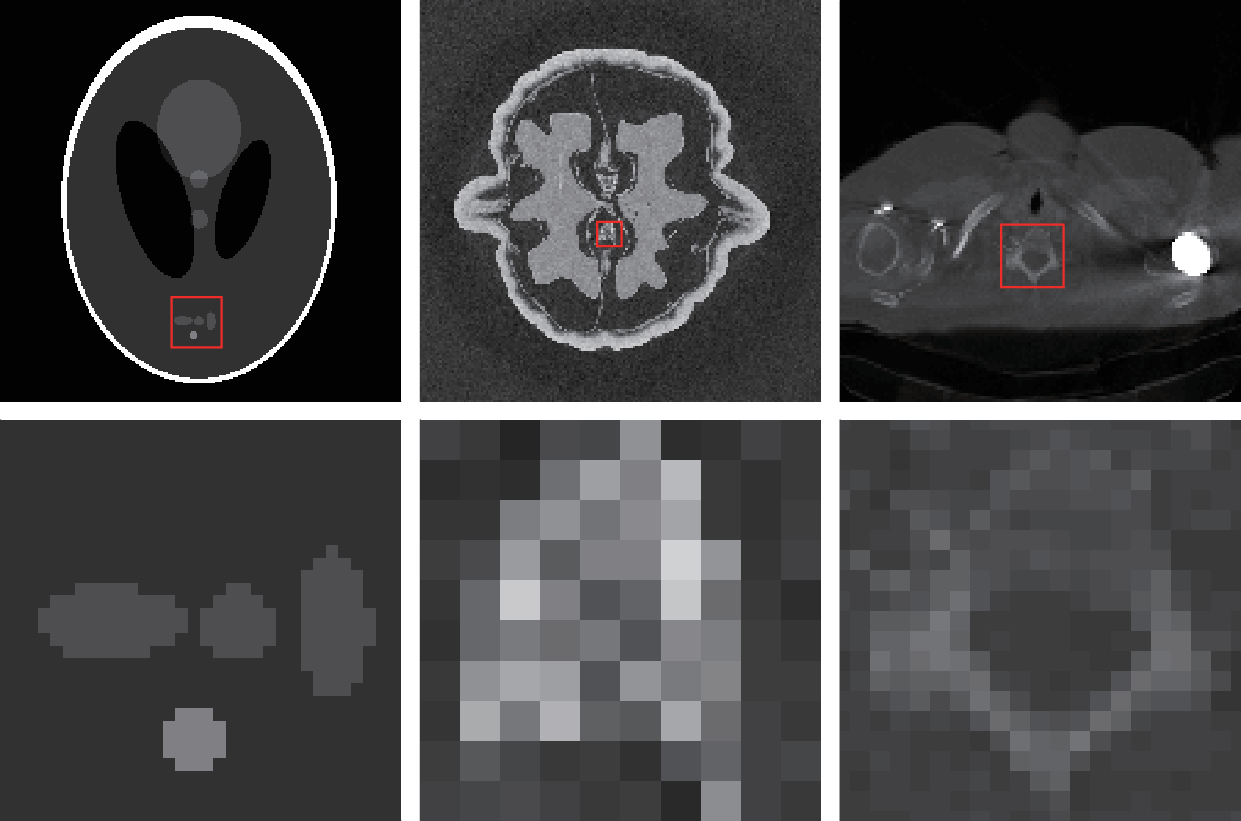}
	\caption{Experimental data and corresponding ROIs. The top row displays (from left to right) the  Shepp-Logan phantom with a simulated circular lesion, the actual walnut image and a clinical lung image, and the bottom row presents the corresponding ROIs. The pixel dimensions of the ROIs are 32$\times$32, 10$\times$10, and 20$\times$20, respectively.}\label{figure:ground}
\end{figure} The reference images, which can be regarded as historical CT scans of the same patient acquired with  the same scanner, are presented in Fig. \ref{figure:ref}. 
\begin{figure}[hbt]
	\centering
	\includegraphics[width=0.8\linewidth]{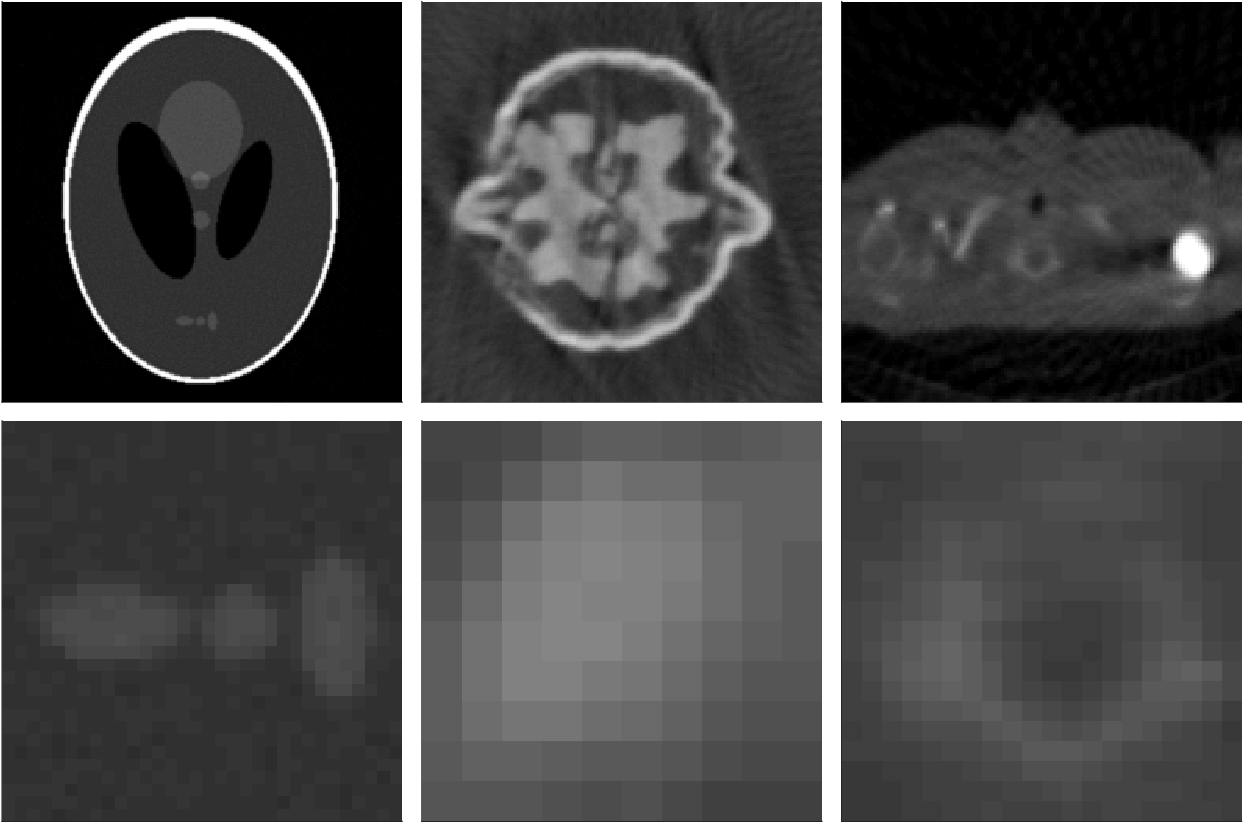}
	\caption{Reference images and corresponding ROIs. Arranged left to right are the reference images for the Shepp-Logan phantom, the walnut projection, and the clinical lung CT image, along with their corresponding ROIs. The reference images (top row) are generated using the generalized Tikhonov regularization with  parameters 200, 10 and 10 respectively.  In the Shepp-Logan case, the projection data contains 0.5\% Gaussian white noise.}\label{figure:ref}
\end{figure}
\subsection{Experiment setup}\label{subsec:setup}
To demonstrate the advantages of the proposed hybrid model, we compare its reconstruction results with those obtained using the box-constrained NWATV method\cite{self}. For quantitative evaluation, we compute the $\mathrm{L}^2$ relative  error $\RE$, the $\mathrm{H}^1$ relative error $\widetilde{\RE}$, and the mean‑square error $\MSE$, defined as follows:
$$
\RE=\frac{\|\mathbf{u}^{\ast}-\mathbf{u}_{0}\|_{\ell_{2}}}{\|\mathbf{u}_{0}\|_{\ell_{2}}},
$$
$$	\widetilde{\RE}=\frac{\sqrt{\|\mathbf{u}^{\ast}-\mathbf{u}_{0}\|_{\ell_{2}}^{2}+\|\mathbf{\mathcal{D}}(\mathbf{u}^{\ast}-\mathbf{u}_{0})\|_{\ell_{2}}^{2}}}{\sqrt{\|\mathbf{u}_{0}\|_{\ell_{2}}^{2}+\|\mathbf{\mathcal{D}}\mathbf{u}_{0}\|_{\ell_{2}}^{2}}},
$$and$$
\MSE=\frac{\|\mathbf{u}^{\ast}-\mathbf{u}_{0}\|_{\ell_{2}}^2}{n}.
$$Where $\mathbf{u}_0$ represents the ground truth, and the $\mathbf{u}^{\ast}$ represents the reconstruction result.

In addition, we compute the peak signal-to-noise ratio $\PSNR$ and the structural similarity index $\SSIM$ \cite{SSIM}, which are defined as follows:
\begin{equation*}
	\PSNR=10\log_{10}\frac{\max(\mathbf{u}^{\ast}\odot\mathbf{u}^{\ast})}{\MSE},
\end{equation*}
and
\begin{equation*}
	\SSIM=\frac{(2\mu_{\ast}\mu_0+C_{1})(2\sigma_{\ast0}+C_{2})}{(\mu_{\ast}^{2}+\mu_0^{2}+C_{1})(\sigma_{\ast}^{2}+\sigma_0^{2}+C_{2})},
\end{equation*} 
where $\odot$ denotes element‑wise multiplication.
The terms $\mu_\ast$ and $\sigma_\ast$ are the local mean and standard deviation of the reconstruction $\mathbf{u}^*$, while $\mu_0$, $\sigma_0$ are those of the reference image $\mathbf{u}_0$; $\sigma_{\ast0}$ denotes the cross-covariance between  $\mathbf{u}^{\ast}$ and $\mathbf{u}_{0}$.
The constants $C_1=10^{-4}$ and $C_2=9\times10^{-4}$ are the default values of the MATLAB built-in function ``ssim".

For the Shepp-Logan phantom, the reconstructed image size is $256 \times 256$ pixels. For the actual walnut CT data, the reconstructed image size is $164 \times 164$. For the clinical lung image, we select the first image of patient R\_172 from the dataset. Its original resolution is $512 \times 512$, and we downsample it uniformly to obtain a $128 \times 128$ ground‑truth image, from which we generate the corresponding projection data via the Radon transform.  Equation \eqref{u-box-update} is solved using the generalized minimal residual algorithm (GMRES) \cite{Saad_1986}. 

The reconstructions are carried out using Matlab 2018a (The MathWorks, Inc., Natick, MA, USA) on a workstation with 1.60 GHz Inter (R) Core (TM) i5-8250U CPU, 8.00 GB memory, Windows 10 operating system. Additionally, we use the MATLAB package AIR Tools \uppercase\expandafter{\romannumeral2} to simulate the \textit{parallel beam}  for the CT scanning \cite{AIR}.
\subsection{The reconstruction of NWATG statistical model}\label{section:statistical experiment}

In this section, firstly, we demonstrate the NWATG statistical model with the Shepp-Logan phantom.  
The reference image is generated as follows. 
First, projection data are simulated with 600 angles and then contaminated with 0.5\% Gaussian white noise. These noisy data are reconstructed via generalized Tikhonov regularization (parameter 200) to produce the reference image, which is displayed in Fig. \ref{figure:ref}. 
For the MCMC sampling, projection data are simulated at 30 uniformly spaced angles over $[0^\circ, 179^\circ]$ and corrupted with 1\% Gaussian white noise. The initial value $\mathbf{u}^{(0)}$ is obtained by applying the  generalized Tikhonov regularization with parameter 1000 to the projection data restricted to the ROI.  We collect 10,000 MCMC samples, discarding the first 8,000 as burn-in. The step‑size parameter $\alpha$ is tuned to achieve an acceptance rate of approximately 25\%. The CM estimate from the MCMC samples is shown in Fig. \ref{result_0.005ref}.  The relevant parameters and the resulting acceptance rates are summarized in Table \ref{table_cm}.
\begin{figure}[h]
	\centering
	\includegraphics[width=0.8\linewidth]{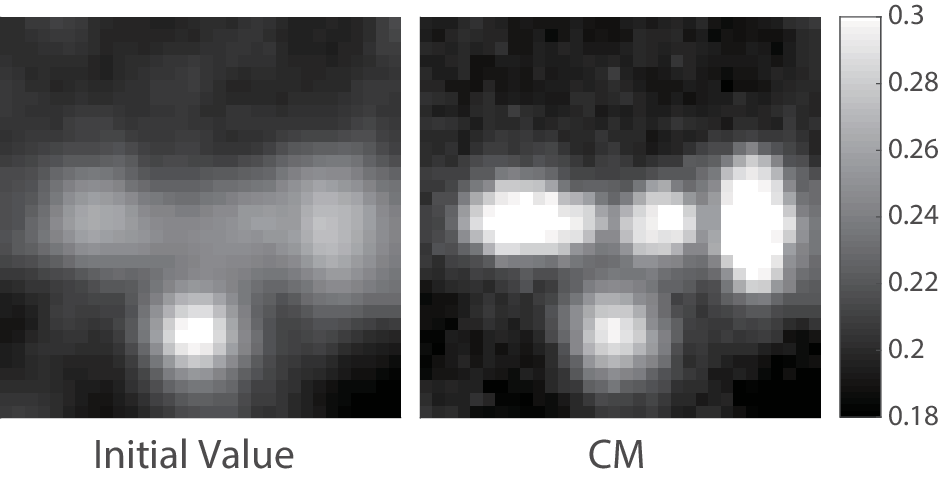}
	\caption{The CM estimate from MCMC sampling within the ROI of Shepp-Logan phantom.  The left is the initial value, and the right is the CM estimate. The image size is 32$\times$32.}\label{result_0.005ref}
\end{figure}

Finally, we discuss statistical sampling on actual CT data,  including a walnut projection and a clinical lung image.  The reference images, obtained via generalized Tikhonov regularization with a parameter of 10, are shown in Fig. \ref{figure:ref}. The initial value $\mathbf{u}^{(0)}$ for each experiment is set to the reconstruction result obtained by applying generalized Tikhonov regularization to the ROI projection data, using a parameter of 200 for the walnut and 1000 for the lung.
In the walnut experiment, we use the data from the dataset \textbf{Data164}\cite{walnut}. Then, the projection data $\mathbf{y}\in\mathbb{R}^{164\times50}$ and the matrix $A\in\mathbb{R}^{8200\times26896}$ are constructed by taking the first 50 columns of data $\bar{\mathbf{y}}\in\mathbb{R}^{164\times120}$ and the first 8200 rows of matrix $\bar{A}\in\mathbb{R}^{19680\times 26896}$, respectively.  This configuration corresponds to uniformly sampling 50 projection angles over the range of \ang{0} to \ang{149}.
In the clinical lung experiment, the first image of the patient $\R\_172$\cite{lung} is selected. The $512\times512$ image is uniformly downsampled to obtain a $128\times128$ image $\mathbf{u}_0$, as shown in Fig. \ref{figure:ground}. A parallel beam scanning geometry is simulated with 181 detectors and 30 projection angles to construct the system matrix $A$. The corresponding projection data are generated as  $\mathbf{y}=A\mathbf{u}_0$. The MCMC-sampled CM estimates for the ROI projection data are shown in Fig. \ref{result_walnut} and Fig. \ref{result_lung}. The corresponding parameters and acceptance rates are summarized in Table \ref{table_cm}.
\begin{figure}[h]
	\centering
	\includegraphics[width=0.8\linewidth]{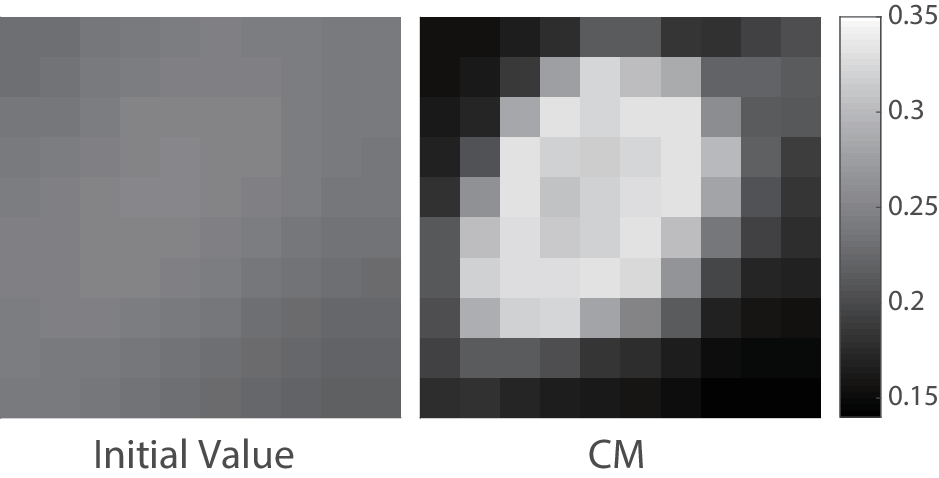}
	\caption{The CM estimate from MCMC sampling within the ROI of walnut image. The left is the initial value, and the right is the CM estimate. The image size is 10$\times$10.}\label{result_walnut}
\end{figure}
\begin{figure}[h]
	\centering
	\includegraphics[width=0.8\linewidth]{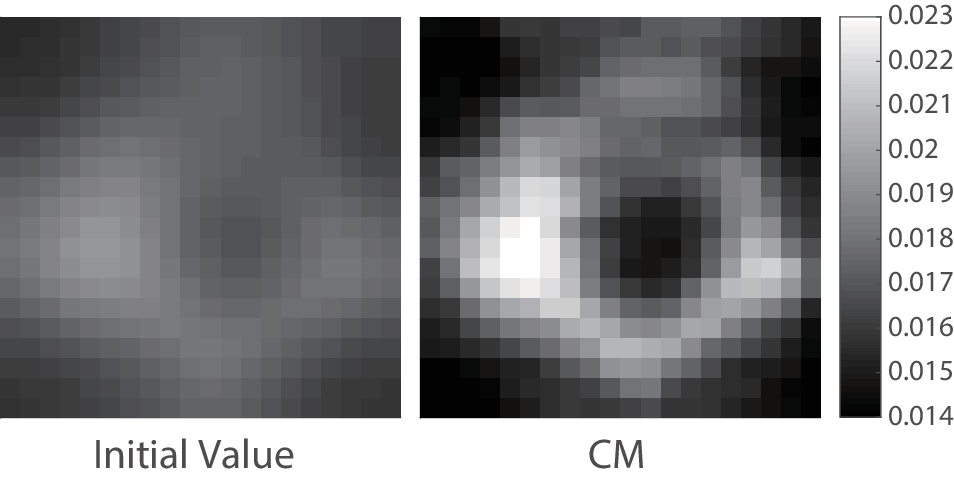}
	\caption{The CM estimate from MCMC sampling within the ROI of clinical lung image. The left is the initial value, and the right is the CM estimate. The image size is 20$\times$20.}\label{result_lung}
\end{figure}
\begin{table}
	\caption{Statistical sampling parameters and acceptance rates.}\label{table_cm}
	\begin{tabular}{l|lll}
	\hline\noalign{\smallskip}
		&$\lambda_{\mathrm{ROI}}$&$\gamma$&Acceptance rate (\%)\\
	\noalign{\smallskip}\hline\noalign{\smallskip}
		Shepp-Logan&0.001&0.004&23.35\\
		Walnut&0.001&0.008&25.15\\
		Clinical lung &0.001&6$\times10^{-5}$&26.95\\
\noalign{\smallskip}\hline
	\end{tabular}
\end{table}
For all experiments in this section, we collect 10000 MCMC samples, discarding the first 8,000 as the burn-in period. The acceptance rate $r$ is calculated from the remaining 2,000 samples using:
$$r=\frac{N_{accepted}}{2000},$$where $N_{accepted}$ is the number of accepted proposals.

When an appropriate reference image is selected, the NWATG statistical model successfully reconstructs images with well-preserved boundaries.
\subsection{The reconstruction of the hybrid model}\label{subsec:rec of HM}
In this section, we demonstrate the superior performance of the proposed hybrid model. Let $\mathbf{X}$ represent the CM estimates from the previous section. The goal is to construct a local regularizer $\mathbf{X}_0$ from the boundary information of $\mathbf{X}$.
While the level set method \cite{Chan2001,Li2010} is a common technique for extracting image edges, we adopt a simplified approach: a level set threshold $\tau$ is manually specified.  To obtain well-defined boundaries, we often select different thresholds for distinct regions of $\mathbf{X}$ to obtain $\mathbf{X}_0$. 
Finally, the intensity box constraint $[a,b]$ is enforced on the reconstruction to obtain the final  image $\mathbf{u}_{\text{final}}$, which is expressed as follows:$$
\mathbf{u}_{\text{final}}=\min\{\max(\mathbf{u}^{*},a),b\}.
$$ Here, $\mathbf{u}^{*}$ is the reconstruction result of the hybrid model.  For brevity, the box-constrained NWATV model is abbreviated as NWATV-box in all subsequent figures and tables. 

The experimental parameters for the box-constrained NWATV model and the proposed hybrid model are listed in Tables \ref{parameter1} and \ref{parameter2}, respectively. 

\begin{table}
	\caption{Parameters of box-constrained NWATV.}\label{parameter1}
	\begin{tabular}{l|lll}
	\hline\noalign{\smallskip}
		Reconstruction image&$\lambda$&$\rho$&$\alpha$\\
		\noalign{\smallskip}\hline\noalign{\smallskip}
		Shepp-Logan&0.006&200&5\\
		\noalign{\smallskip}\hline\noalign{\smallskip}
		Walnut&0.0001&0.4&5\\
		\noalign{\smallskip}\hline\noalign{\smallskip}
		Clinical lung&3$\times 10^{-9}$&0.001&0.01\\
	\noalign{\smallskip}\hline
	\end{tabular}
\end{table}

\begin{table}
	\caption{Parameters of the hybrid model.}\label{parameter2}
	\begin{tabular}{l|lllll}
		\hline\noalign{\smallskip}
		Reconstruction image &$\lambda$&$\rho$&$\rho_1$&$\rho_2$&$\rho_3$\\
	\noalign{\smallskip}\hline\noalign{\smallskip}
		Shepp-Logan&0.007&200&10&10&20\\
		\noalign{\smallskip}\hline\noalign{\smallskip}
		Walnut&0.0001&1&0.1&0.1&0.1\\
		\noalign{\smallskip}\hline\noalign{\smallskip}
		Clinical lung&2.5$\times 10^{-9}$&0.001&$ 10^{-5}$&$10^{-5}$&$ 10^{-5}$\\
		\noalign{\smallskip}\hline
	\end{tabular}
\end{table}

Fig. \ref{global_STI} shows the reconstruction results of the Shepp-Logan image using the box-constrained NWATV and the hybrid model, with corresponding quantitative indicators provided in Table \ref{table:hybrid_SL}. The red arrows highlight the regions of improved performance in the hybrid model reconstruction.

\begin{figure}[h]
	\centering
	\includegraphics[width=\linewidth]{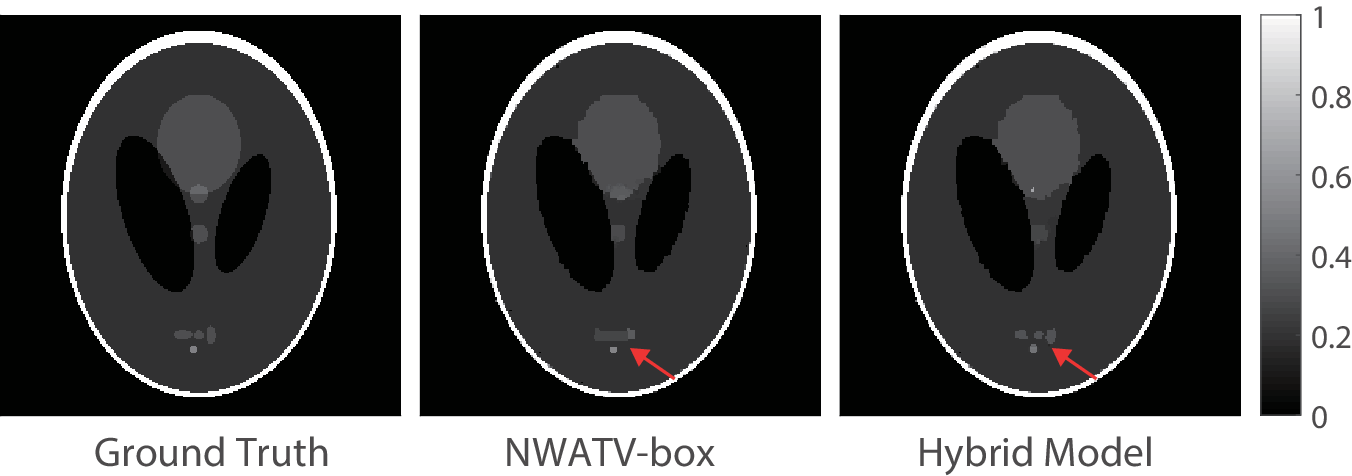}
	\caption{The reconstruction results of the Shepp-Logan image. From left to right: ground truth, the reconstruction result of the box-constrained NWATV regularization, and the reconstruction result of the hybrid model proposed in this paper.}\label{global_STI}
\end{figure}

\begin{table}
	\caption{Quantitative assessment of the box-constrained NWATV and the hybrid model on the Shepp-Logan phantom.}\label{table:hybrid_SL}
	\begin{tabular}{l|l|l|l|l|l|l|l}
		\hline\noalign{\smallskip}
		&Models&$\RE$&$\widetilde{\RE}$&$\MSE$&$\PSNR$&$\SSIM$&CPU time (s)\\
	\noalign{\smallskip}\hline\noalign{\smallskip}
		\multirow{2}{*}{ROI}&NWATV-box&0.136& 0.239&9.484$\times 10^{-4}$& 24.723&0.858&115.514\\
		\noalign{\smallskip}\cline{2-8}\noalign{\smallskip}
		&hybrid model&0.096&0.190&4.682$\times 10^{-4}$&26.051&0.952&162.558\\
		\noalign{\smallskip}\hline\noalign{\smallskip}
		\multirow{2}{*}{Global}&NWATV-box&0.066&0.121&2.651$\times 10^{-4}$&35.803&0.982&/\\
		\noalign{\smallskip}\cline{2-8}\noalign{\smallskip}
		&hybrid model&0.065&0.119&2.584$\times 10^{-4}$&35.877&0.983&/\\
	\noalign{\smallskip}\hline
	\end{tabular}
\end{table}

In the Shepp-Logan phantom reconstruction experiment, the hybrid model outperforms the NWATV-box model. Within the ROI, it reduces the $\RE$ by 29.41\% and increases the $\SSIM$ by 10.96\%. Moreover, for the global image, the hybrid model also yields improvements, decreasing the $\RE$ by 1.52\% and increasing the $\SSIM$ by 0.10\%.

Fig. \ref{Hybrid_walnut} shows the reconstruction results of the walnut image using the box-constrained NWATV and the hybrid model, with corresponding quantitative indicators provided in Table \ref{table:hybrid_walnut}. The red arrows highlight the regions of improved performance in the hybrid model reconstruction.

\begin{figure}[h]
	\centering
	\includegraphics[width=0.8\linewidth]{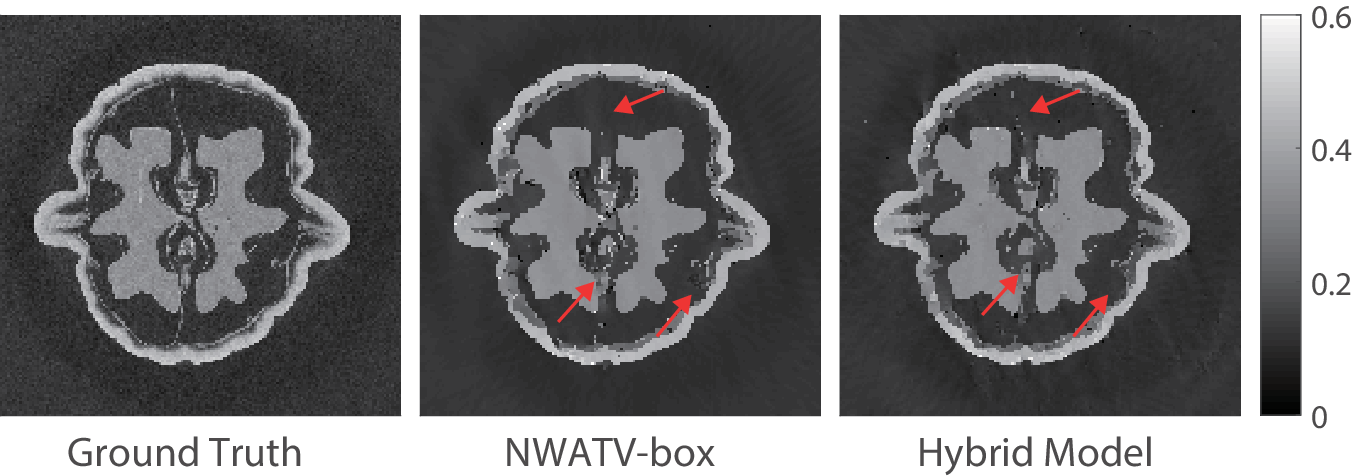}
	\caption{The reconstruction results of the walnut image. From left to right: the ground truth, the reconstruction result of the box-constrained NWATV regularization, and the reconstruction result of the hybrid model proposed in this paper.}\label{Hybrid_walnut}
\end{figure}

\begin{table}[h]
	\caption{Quantitative assessment of the box-constrained NWATV and the hybrid model on the walnut model.}\label{table:hybrid_walnut}
	\begin{tabular}{l|l|l|l|l|l|l|l}
	\hline\noalign{\smallskip}
		&Models&$\RE$&$\widetilde{\RE}$&$\MSE$&$\PSNR$&$\SSIM$&CPU time (s)\\
	\noalign{\smallskip}\hline\noalign{\smallskip}
		\multirow{2}{*}{ROI}&NWATV-box&0.312& 0.579&0.0059& 16.598&0.630&50.191\\
		\noalign{\smallskip}\cline{2-8}\noalign{\smallskip}
		&hybrid model&0.263&0.456&0.0042&15.472&0.715&272.580\\
		\noalign{\smallskip}\hline\noalign{\smallskip}
		\multirow{2}{*}{Global}&NWATV-box&0.158&0.317&0.001&25.388&0.798&/\\
		\noalign{\smallskip}\cline{2-8}\noalign{\smallskip}
		&hybrid model&0.146&0.297&8.976$\times 10^{-4}$&26.032&0.802&/\\
\noalign{\smallskip}\hline
	\end{tabular}
\end{table}

In the walnut reconstruction experiment, the hybrid model outperforms the NWATV-box model. Within the ROI, it reduces the $\RE$ by 15.71\% and increases the $\SSIM$ by 13.49\%. Moreover, for the global image, the hybrid model also yields improvements, decreasing the $\RE$ by 7.59\% and increasing the $\SSIM$ by 0.50\%. However, compared to the box-constrained NWATV result, the $\PSNR$ of the hybrid model within the ROI is 6.78\% lower. This is because the reconstruction results of the hybrid model rely on the CM estimates from sampling within the ROI, which tends to average out localized high-intensity features.

Fig. \ref{Hybrid_lung} shows the reconstruction results of the clinical lung image using the box-constrained NWATV and the hybrid model, with corresponding quantitative indicators proposed in Table \ref{table:hybrid_lung}. The red arrows highlight the regions of improved performance in the hybrid model reconstruction.

\begin{figure}[h]
	\centering
	\includegraphics[width=0.8\linewidth]{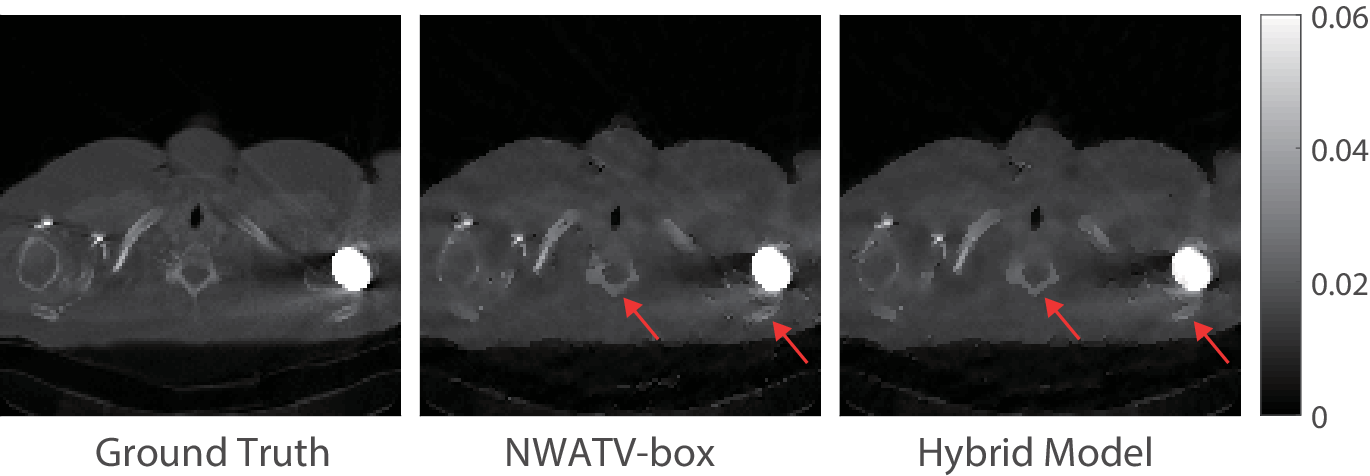}
	\caption{The reconstruction results of the clinical lung image. From left to right: ground truth, the reconstruction result of the box-constrained NWATV regularization, and the reconstruction result of the Hybrid Model proposed in this paper.}\label{Hybrid_lung}
\end{figure}

\begin{table}
	\caption{Quantitative assessment  of the box-constrained NWATV and the hybrid model on the clinical lung image.}\label{table:hybrid_lung}
	\begin{tabular}{l|l|l|l|l|l|l|l}
		\hline\noalign{\smallskip}
		&Models&$\RE$&$\widetilde{\RE}$&$\MSE$&$\PSNR$&$\SSIM$&CPU time (s)\\
		\noalign{\smallskip}\hline\noalign{\smallskip}
		\multirow{2}{*}{ROI}&NWATV-box&0.085&0.180&2.205$\times 10^{-6}$&24.666&0.998&146.728\\
	\noalign{\smallskip}\cline{2-8}\noalign{\smallskip}
		&hybrid model&0.078&0.171&1.840$\times 10^{-6}$&26.494&0.998&227.982\\
		\noalign{\smallskip}\hline\noalign{\smallskip}
		\multirow{2}{*}{Global}&NWATV-box&0.097&0.195&1.411$\times 10^{-6}$&34.419&0.998&/\\
		\noalign{\smallskip}\cline{2-8}\noalign{\smallskip}
		&hybrid model&0.091&0.179&1.233$\times 10^{-6}$&34.655&0.998&/\\
		\noalign{\smallskip}\hline
	\end{tabular}
\end{table}

In the clinical lung reconstruction experiment,  the hybrid model outperforms the NWATV-box model. Within the ROI, it reduces the $\RE$ by 8.24\%. Moreover, for the global image, the hybrid model also yields improvements, decreasing the $\RE$ by 6.19\%.

In conclusion, the experimental results demonstrate that the proposed hybrid model outperforms the traditional iterative method (the box-constrained NWATV). Specifically, under the sparse sampling condition, it achieves an average reduction of approximately 17.79\% in the $\RE$ and an average increase of 8.15\% in the $\SSIM$ within the ROI. For the global image, the corresponding improvements are approximately 5.1\% and 0.2\%, respectively. 
Compared to traditional statistical calculation methods, the hybrid model significantly reduces  the effective image dimension for sampling. This leads to a substantial decrease in the computational burden of MCMC sampling, improving the computational efficiency by approximately 97\%.

\section{Discuss}\label{sec:discuss}
In regularization-based approaches, the discrepancy principle \cite{SomersaloBook} is commonly employed to determine the regularization parameter. In statistical (Bayesian) modeling, the parameters can be treated as additional random variables within the sampling framework, thus establishing a hierarchical Bayesian model \cite{SomersaloBook} that simultaneously performs parameter estimation and image sampling.

The posterior density is defined by \eqref{J}, which includes a regularization term $\mathrm{R}(\mathbf{u}_{\mathrm{ROI}})=\lambda_{\mathrm{ROI}}\|\mathbf{p}_{\mathrm{ROI}}\cdot\mathbf{\mathcal{D}}\mathbf{u}_{\mathrm{ROI}}\|_{\ell_{1}}$. However, parameters selection with this posterior is challenging because its normalization constant is difficult to compute. To leverage the tractable normalization constant of the Gaussian distribution, we introduce a parameter $\delta$ (where $\lambda_{\mathrm{ROI}}\delta = 1$) into the data fidelity term. Subsequently, we treat $\delta$ itself as a parameter to be inferred. We assign a Gamma distribution prior to the parameter $\delta$. Its probability density function is given by \cite{Lawless2003}:
$$G(x;a,b)=\frac{b^ax^{a-1}}{\Gamma(a)}\exp(-bx),\quad x>0.$$Here, $a$ is the shape parameter governing the form of the distribution, and $b^{-1}$ is the scale parameter controlling its dispersion. For $a \geq 1$, the peak of the distribution is located at $(a-1)/b$.

Consequently,  $\delta$ satisfies$$
\delta\sim\delta^{\alpha_{\delta}-1}\exp\{-\beta_{\delta}\delta\},
$$and we choose $\alpha_{\delta}=1,\beta_{\delta}=10^{-8}$ to ensure the prior of $\delta$ provides almost no information. Then, the posterior density can be written as$$
\frac{d\nu^y_n}{d\nu_0}\propto\delta^{M/2+\alpha_{\delta}-1}\exp\{-\frac{\lambda_{\mathrm{ROI}}\delta}{2\sigma^2}\|\mathbf{y}_{\mathrm{ROI}}-A_{\mathrm{ROI}}\mathbf{u}_{\mathrm{ROI}}\|_{\ell_{2}}^2-\lambda_{\mathrm{ROI}}\|\mathbf{p}_{\mathrm{ROI}}\cdot\mathbf{\mathcal{D}}\mathbf{u}_{\mathrm{ROI}}\|_{\ell_{1}}-\beta_{\delta}\delta\}.
$$Choosing a Gamma prior for $\delta$ ensures that its conditional posterior distribution is also a Gamma distribution, which is straightforward to sample from. Therefore, in the $k$-th sampling iteration, $\delta^{(k)}$ is drawn from its conditional Gamma distribution\begin{equation}
	\delta^{(k)}\sim(\delta^{(k)})^{M/2+\alpha_{\delta}-1}\exp\left\{-\left(\frac{\lambda_{\mathrm{ROI}}^{(k-1)}}{2\sigma^2}\|\mathbf{y}_{\mathrm{ROI}}-A_{\mathrm{ROI}}\mathbf{u}_{\mathrm{ROI}}^{(k-1)}\|_{\ell_{2}}^{2}+\beta_{\delta}\right)\delta^{(k)}\right\}.\label{ga}
\end{equation}Because $\beta_{\delta}$ is sufficiently small, the maximum a posteriori (MAP) estimate for $\delta$ at the $k$-th iteration is given by$$
\delta_{\mathrm{MAP}}^{(k)}=\frac{M/2+\alpha_{\delta}-1}{\frac{\lambda_{\mathrm{ROI}}^{(k-1)}}{2\sigma^2}\|\mathbf{y}_{\mathrm{ROI}}-A_{\mathrm{ROI}}\mathbf{u}_{\mathrm{ROI}}^{(k-1)}\|_{\ell_{2}}^2+\beta_{\delta}}\approx{\frac{M/2+\alpha_{\delta}-1}{\frac{\lambda_{\mathrm{ROI}}^{(k-1)}}{2\sigma^2}\|\mathbf{y}_{\mathrm{ROI}}-A_{\mathrm{ROI}}\mathbf{u}_{\mathrm{ROI}}^{(k-1)}\|_{\ell_{2}}^2}}.
$$We assume that  $\delta_{\mathrm{MAP}}^{(k)}$ varies within a narrow range. Namely, the conditional posterior of $\delta^{(k)}$ is a Gamma distribution with a small variance, concentrating its probability mass near $\delta_{\mathrm{MAP}}^{(k)}$. Furthermore, the corresponding sequence ${ \lambda_{\mathrm{ROI}}^{(k)} }$ also exhibits limited variation. Based on this assumption of limited fluctuation,  we set the initial value as $\lambda_{\mathrm{ROI}}^{(0)}=10^{-3}$ and compute the corresponding initial $\delta^{(0)}$ using its MAP estimator form: $$\delta^{(0)}={\frac{M/2+\alpha_{\delta}-1}{\frac{\lambda_{\mathrm{ROI}}^{(0)}}{2\sigma^2}\|\mathbf{y}_{\mathrm{ROI}}-A_{\mathrm{ROI}}\mathbf{u}_{\mathrm{ROI}}^{(0)}\|_{\ell_{2}}^2}}.$$ It then follows that the ratio $\frac{\delta^{(k)}}{\lambda^{(k)}}$ remains approximately constant across iterations. We denote this approximate constant by $T$: \begin{equation}
	\frac{\delta^{(k)}}{
		\lambda_{\mathrm{ROI}}^{(k)}}\approx\frac{\delta^{(0)}}{\lambda_{\mathrm{ROI}}^{(0)}}=:T.\label{T}
\end{equation}
Then, \begin{equation}
	\lambda_{\mathrm{ROI}}^{(k)}=\frac{1}{\delta^{(k)}}\approx\frac{1}{\delta_{MAP}^{(k)}}=\frac{\frac{\lambda_{\mathrm{ROI}}^{(k-1)}}{2\sigma^2}\|\mathbf{y}_{\mathrm{ROI}}-A_{\mathrm{ROI}}\mathbf{u}_{\mathrm{ROI}}^{(k-1)}\|_{\ell_{2}}^2}{M/2+\alpha_{\delta}-1},\label{lam}
\end{equation}and we have $$\frac{\lambda_{\mathrm{ROI}}^{(k)}}{\lambda_{\mathrm{ROI}}^{(k-1)}}\approx\frac{\frac{1}{2\sigma^2}\|\mathbf{y}_{\mathrm{ROI}}-A_{\mathrm{ROI}}\mathbf{u}_{\mathrm{ROI}}^{(k-1)}\|_{\ell_{2}}^2}{M/2+\alpha_{\delta}-1}.$$If the initial values are poor and the image region is small (such as an ROI), the iterative process can lead to a situation where $$\frac{1}{2\sigma^2}\|\mathbf{y}_{\mathrm{ROI}}-A_{\mathrm{ROI}}\mathbf{u}_{\mathrm{ROI}}^{(k-1)}\|_{\ell_{2}}^2>M/2+\alpha_{\delta}-1.$$ Consequently, the value of $\lambda_{\mathrm{ROI}}$ increases monotonically, as illustrated in Fig. \ref{choose}(a), eventually reaching an order of magnitude as high as $10^{300}$. Under these conditions,  the regularization parameter $\lambda_{\mathrm{ROI}}^{(k)}$ fails to balance the data fidelity and regularization terms when sampling $\mathbf{u}_{\mathrm{ROI}}^{(k)}$.

To establish a stable scheme for selecting $\lambda_{\mathrm{ROI}}$, we leverage the approximate relation \eqref{T} to derive the following approximation of \eqref{lam}:$$
\lambda_{\mathrm{ROI}}^{(k)}=\sqrt{\frac{\lambda_{\mathrm{ROI}}^{(k-1)}\delta^{(k)}\|\mathbf{y}_{\mathrm{ROI}}-A_{\mathrm{ROI}}\mathbf{u}_{\mathrm{ROI}}^{(k-1)}\|_{\ell_{2}}^2}{2\sigma^2T(M/2+\alpha_{\delta}-1)}}.
$$As shown in Fig. \ref{choose} (b), the parameter $\lambda_{\mathrm{ROI}}$ takes values approximately within the range $[1.12\times10^{-3},1.24\times10^{-3}]$. In practice, since $\delta$ is sampled stochastically, the product $\lambda_{\mathrm{ROI}}\delta$ fluctuates accordingly. 
\begin{figure}[h]
	\centering
	\subfloat[The sampling process of $\lambda_{\mathrm{ROI}}$ when  $\lambda_{\mathrm{ROI}}=1/\delta_{\mathrm{MAP}}$.]{
		\begin{minipage}{0.4\linewidth}
			\centering
			\includegraphics[height=0.8\linewidth]{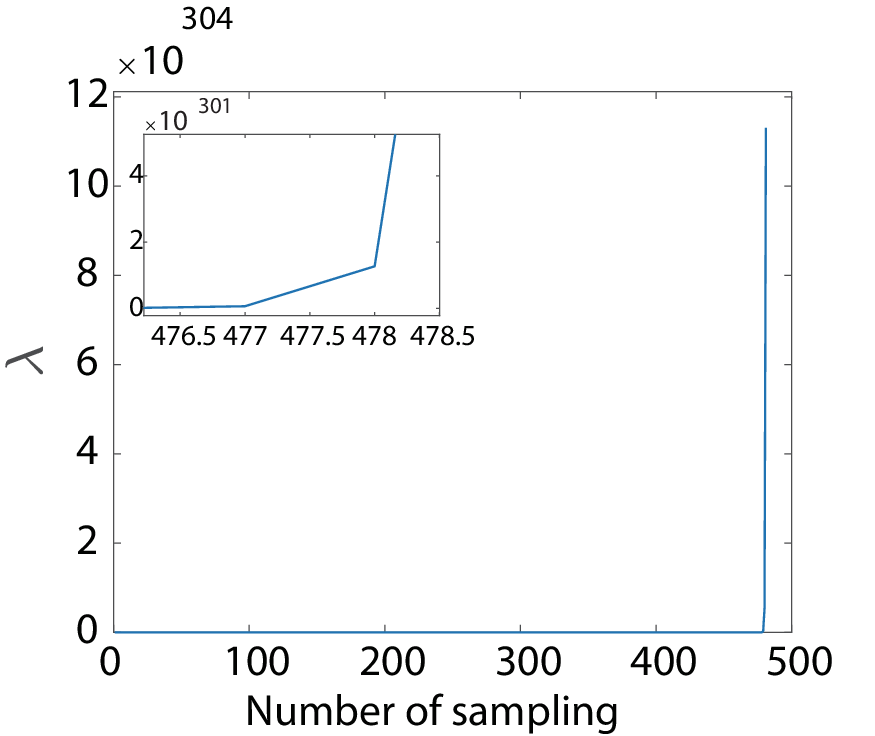}
	\end{minipage}}
	~
	\subfloat[The density function of $\lambda_{\mathrm{ROI}}$ when $\lambda_{\mathrm{ROI}}=1/\delta_{\mathrm{MAP}}$ is modified by $T$.]{
		\begin{minipage}{0.4\linewidth}
			\centering
			\includegraphics[height=0.8\linewidth]{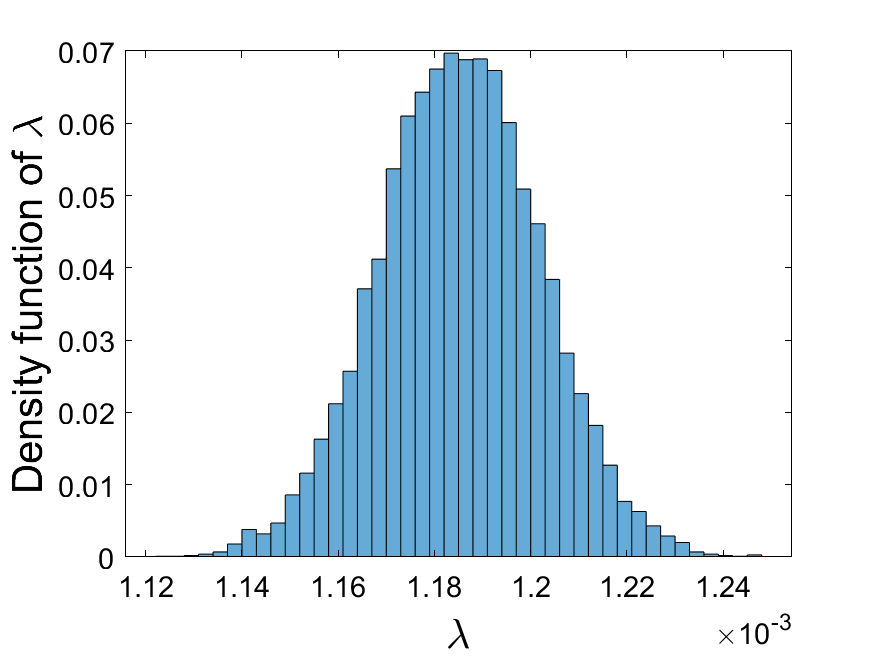}
	\end{minipage}}
	\caption{Sampling of $\lambda_{\mathrm{ROI}}$ under the constraint $\lambda_{\mathrm{ROI}} = 1 / \delta_{\mathrm{MAP}}$ with step size $\alpha = 0.002$. (a) shows the evolution of $\lambda_{\mathrm{ROI}}$ across sampling iterations; (b) represents the frequency histogram of $\lambda_{\mathrm{ROI}}$ with the modified constant $T=\delta^{(0)}/\lambda_{\mathrm{ROI}}^{(0)}$.}\label{choose}
\end{figure}

To make the value of $\lambda_{\mathrm{ROI}}^{(k)}\delta^{(k)}$ closer to 1, we correct $\delta_{\mathrm{MAP}}^{(k)}$ using $\lambda_{\mathrm{ROI}}^{(k)}\delta^{(k)}$. The specific formula is as follows:
\begin{equation}
	\lambda_{\mathrm{ROI}}^{(k)}\cdot(\delta_{\mathrm{MAP}}^{(k)}\cdot\lambda_{\mathrm{ROI}}^{(k)}\delta^{(k)})=1.\label{lambda_delta}
\end{equation}

Next, we employ the MCMC method\cite{SomersaloBook}  to to sample from the probability distribution of $\delta$. This sampling is based on the energy function $$J(\mathbf{u}_{\mathrm{ROI}})=\frac{\lambda_{\mathrm{ROI}}\delta}{2\sigma^2}\|\mathbf{y}_{\mathrm{ROI}}-A_{\mathrm{ROI}}\mathbf{u}_{\mathrm{ROI}}\|_{\ell_{2}}^2+\lambda_{\mathrm{ROI}}\|\mathbf{p}_{\mathrm{ROI}}\cdot\mathbf{\mathcal{D}}\mathbf{u}_{\mathrm{ROI}}\|_{\ell_{1}},$$which is derived from the posterior distribution. The complete sampling procedure is outlined in Algorithm \ref{algorithm:delta}.
\begin{algorithm}[h]
	\caption{Hierarchical model of parameter $\delta$.}
	\label{algorithm:delta}
	\begin{algorithmic}[1]
		\REQUIRE {Projection matrix $A_{\mathrm{ROI}}$, the measured data $\mathbf{y}_{\mathrm{ROI}}$, the number of projection lines $M$, the standard deviation of noise $\sigma$. \\\ \ \ \ Parameters: $\alpha_{\delta},\beta_{\delta},\beta\in\mathbb{R}^{+},$ the maximum number of iterations kmax$\in\mathbb{Z}^{+}$.}
		
		\ENSURE {Parameter $\delta$ and the CM estimate $\mathbf{X}$.}
		\STATE {Initial value: $\mathbf{u}_{\mathrm{ROI}}^{(0)},\lambda_{\mathrm{ROI}}^{(0)},\delta^{(0)}$.}
		\FOR{k=1:kmax}
		\STATE {Draw $\delta^{(k)}$ from \eqref{ga}.}
		\STATE{From \eqref{lambda_delta}, $\lambda_{\mathrm{ROI}}^{(k)}=\sqrt[4]{\frac{\lambda_{\mathrm{ROI}}^{(k-1)}\delta^{(k)}\|\mathbf{y}_{\mathrm{ROI}}-A_{\mathrm{ROI}}\mathbf{u}_{\mathrm{ROI}}^{(k-1)}\|_{\ell_{2}}^2}{2\sigma^2T^2(M/2+\alpha_{\delta}-1)}}$.}
		\STATE {Update $\mathbf{v}=\sqrt{1-\alpha^2}\mathbf{u}_{\mathrm{ROI}}+\alpha\mathbf{w}$, where $\mathbf{w}\sim\mathcal{N}(0,C)$.}
		\STATE {Draw $t\sim\mathcal{U}[0,1]$.}
		\IF{$t\le\min\{1,\exp\{J(\mathbf{u}_{\mathrm{ROI}}^{(k)})-J(\mathbf{v})\}\}$} 
		\STATE{$\mathbf{u}_{\mathrm{ROI}}^{(k)}=\mathbf{v}$;}
		\ELSE 
		\STATE{$\mathbf{u}_{\mathrm{ROI}}^{(k)}=\mathbf{u}_{\mathrm{ROI}}^{(k-1)}$;}\ENDIF
		\ENDFOR
		\STATE{Calculate the conditional mean $\mathbf{X}$.}
	\end{algorithmic}
\end{algorithm}
Specifically, we use the Shepp-Logan phantom shown in Fig. \ref{figure:ground}(a) with the following settings: noise level = 1\%, number of projection angles = 60, step size = 0.002, and total number of MCMC samples = 10,000. The evolution of the log-posterior with sample number, plotted in Fig. \ref{parameter_J}(a), allows us to estimate the required burn-in period. In fact, the log-posterior density stabilizes within a narrow range, indicating that the burn-in period is negligible. Furthermore, Fig. \ref{histogram} presents the frequency histograms for $\delta$ and $\lambda_{\mathrm{ROI}}$. The histogram for $\delta$ confirms that it follows a narrow Gamma distribution, while \ref{histogram}(b) shows that $\lambda_{\mathrm{ROI}}$ lies approximately within $[1.15 \times 10^{-3}, 1.22 \times 10^{-3}]$. Given that $\lambda_{\mathrm{ROI}}$ also exhibits a narrow distribution, we select its MAP estimate $\lambda_{\mathrm{MAP}}$, as the final regularization parameter. The corresponding CM estimate with parameter $\lambda_{\mathrm{MAP}}$ is displayed in Fig. \ref{lambda_map}. For this sampling process, the step size is 0.002, the first 8000 samples are discarded as burn-in period, and the acceptance rate is 28.30\%.
\begin{figure}[hbt]
	\centering
	\subfloat[The evolution of $-J$ in the algorithm \ref{algorithm:delta}.]{
		\begin{minipage}{0.4\linewidth}
			\centering
			\includegraphics[width=\linewidth]{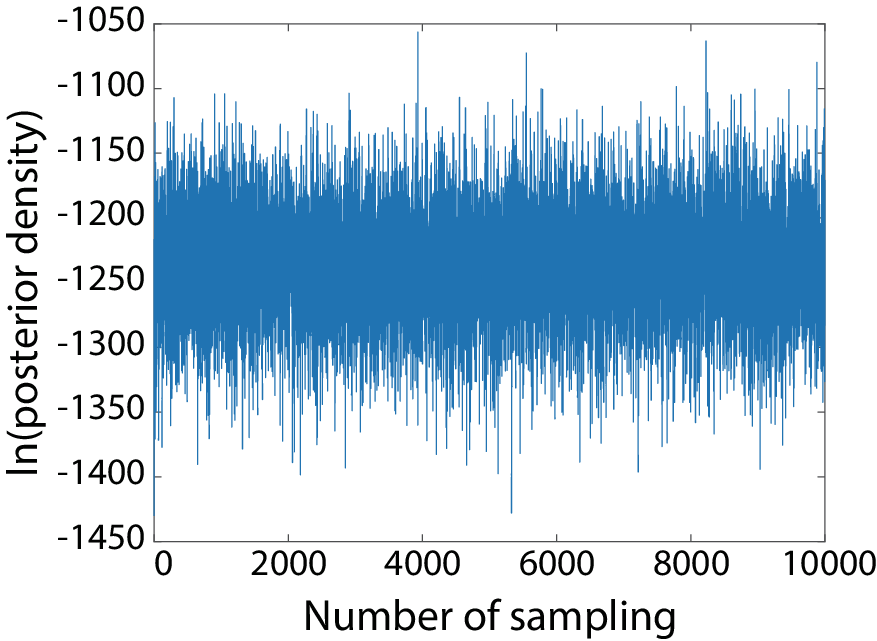}
	\end{minipage}}
	~
	\subfloat[The evolution of $\lambda_{\mathrm{ROI}}$ in the algorithm \ref{algorithm:delta}.]{
		\begin{minipage}{0.4\linewidth}
			\centering
			\includegraphics[width=\linewidth]{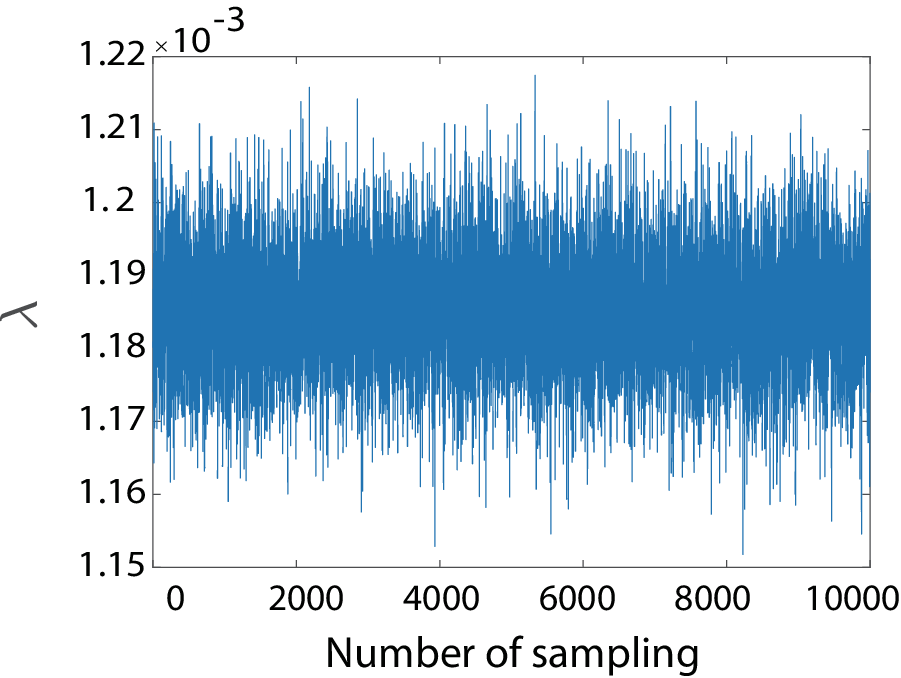}
	\end{minipage}}
	\caption{The evolution with 10000 samples.}
	\label{parameter_J}
\end{figure}
\begin{figure}[h]
	\centering
	\subfloat[Density function of parameter $\delta$.]{
		\begin{minipage}{0.4\linewidth}
			\centering
			\includegraphics[width=\linewidth]{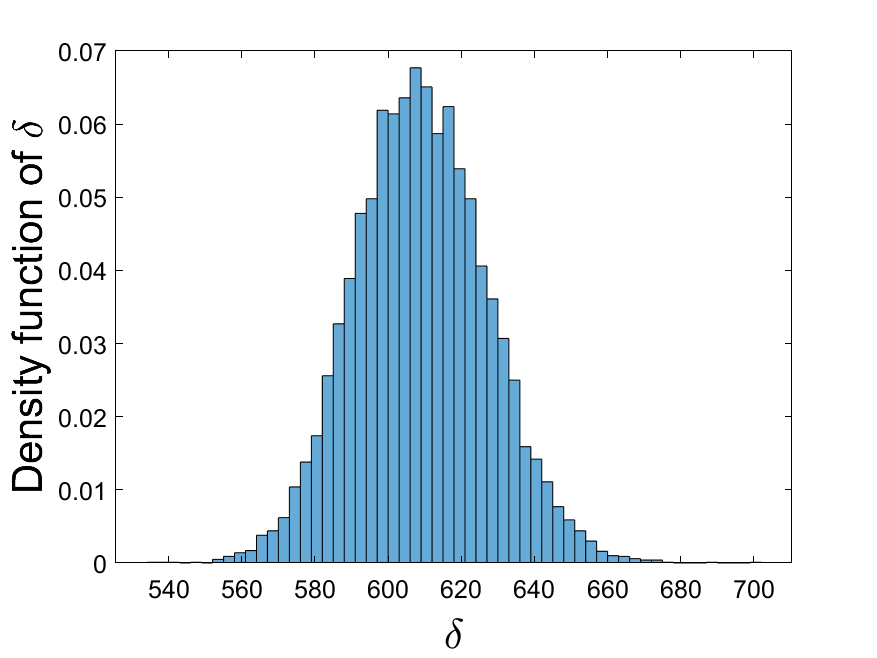}
	\end{minipage}}
	~
	\subfloat[Density function of the parameter $\lambda_{\mathrm{ROI}}$.]{
		\begin{minipage}{0.4\linewidth}
			\centering
			\includegraphics[width=\linewidth]{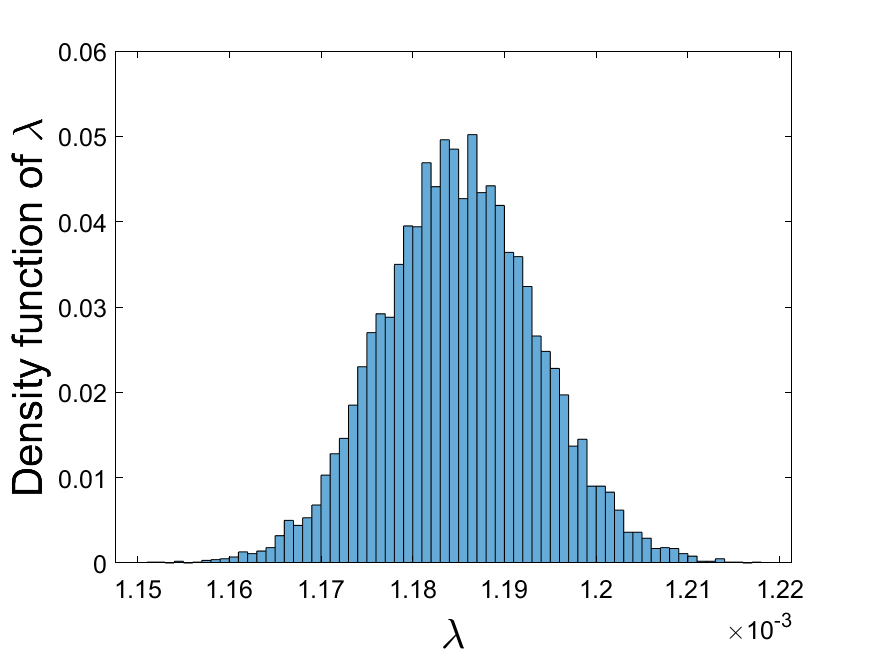}
	\end{minipage}}
	\caption{Frequency histograms of corresponding parameters.  (a) represents the distribution of $\delta$; (b) represents the distribution of $\lambda_{\mathrm{ROI}}$.}\label{histogram}
\end{figure}
\begin{figure}[h]
	\centering
	\includegraphics[width=0.6\linewidth]{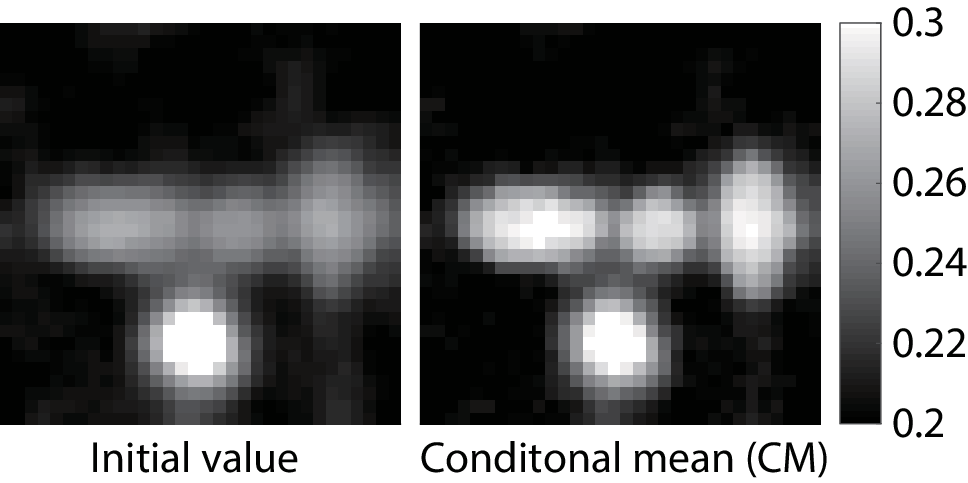}
	\caption{The CM estimate with the final parameter $\lambda_{\mathrm{MAP}}$. The step size is $\alpha=0.002$. The left is the initial value and the right is the CM estimate.}\label{lambda_map}
\end{figure}
The trace confirms that $\lambda_{\mathrm{ROI}}$ remains stable within a narrow range throughout sampling, and its precise level exhibits a dependence on the chosen initial value.
\section{ Conclusions and future works}\label{sec:conclusion}
In this paper, we have proposed a hybrid model that integrates statistical and traditional regularization approaches for sparse-view CT reconstruction. The superiority of the proposed model was demonstrated across multiple datasets, including the Shepp-Logan phantom, actual walnut projections, and the clinical lung CT image. The model effectively improves reconstruction quality, notably by preserving fine structural details and sharp image boundaries.

In the future, we  plan to explore the use of the Segment Anything Model (SAM)\cite{Kirillov2023} for automated boundary extraction.

	\end{sloppypar}
\end{document}